\documentclass{svmult}

\usepackage[utf8]{inputenc}
\usepackage[english]{babel}

\usepackage{mathptmx}
\usepackage{helvet}
\usepackage{courier}        
\usepackage[bottom]{footmisc}

\usepackage{amssymb,amsmath}

\newcommand{\im}{\I}

\newcommand{\N}{\mathbb{N}}
\newcommand{\R}{\mathbb{R}}
\newcommand{\Z}{\mathbb{Z}}
\newcommand{\bC}{\mathbb{C}}
\newcommand{\defeq}{:=}
\newcommand{\tprod}{\otimes}
\newcommand{\one}{\mathbf{1}}
\newcommand{\cH}{\mathcal{H}}
\newcommand{\cA}{\mathcal{A}}
\newcommand{\trk}{\mathrm{tr}}
\newcommand{\ctprod}{\hat{\otimes}}
\newcommand{\ds}{\circ}
\newcommand{\fa}{\dagger}
\newcommand{\cF}{\mathcal{F}}
\newcommand{\cAn}{\mathcal{A}_0}
\newcommand{\cAe}{\mathcal{A}_{\mathrm{e}}}
\newcommand{\tcl}{\mathcal{T}}
\newcommand{\la}{*}
\newcommand{\fg}{\mathfrak{g}}
\newcommand{\fge}{\mathfrak{g}_{\mathrm{e}}}
\newcommand{\fh}{\mathfrak{h}}
\newcommand{\fm}{\mathfrak{m}}

\newcommand{\fn}{\mathfrak{n}}
\newcommand{\asl}{\Xi}
\newcommand{\lop}{\|}
\newcommand{\rop}{\|_{\mathrm{op}}}
\newcommand{\vac}{\psi_0}
\newcommand{\fp}{\mathfrak{p}}
\newcommand{\hol}{\mathrm{Hol}}
\newcommand{\clin}{\mathcal{B}}
\newcommand{\sig}[1]{[#1]}
\newcommand{\fdg}[1]{|#1|}

\newtheorem{dfn}{Definition}[section]

\newtheorem{lem}[dfn]{Lemma}
\newtheorem{prop}[dfn]{Proposition}
\newtheorem{thm}[dfn]{Theorem}

\smartqed
\newcommand{\myqed}{\qed}

\begin{document}

\title*{Coherent states in fermionic Fock-Krein spaces and their amplitudes}
\author{Robert Oeckl}
\institute{Robert Oeckl \at Centro de Ciencias Matemáticas,
Universidad Nacional Autónoma de México,
C.P.~58190, Morelia, Michoacán, Mexico, \email{robert@matmor.unam.mx}}

\maketitle

\abstract{%
We generalize the fermionic coherent states to the case of Fock-Krein spaces, i.e., Fock spaces with an idefinite inner product of Krein type. This allows for their application in topological or functorial quantum field theory and more specifically in general boundary quantum field theory. In this context we derive a universal formula for the amplitude of a coherent state in linear field theory on an arbitrary manifold with boundary.
}

\section{Introduction}

A key property of bosonic coherent states is that they can closely mimic the behavior of classical states in the classical counterpart of a quantum system. This is strikingly exemplified in the following situation. Consider a classical field theory, linear for simplicity, determined by partial differential equations of motion in spacetime. Consider a spacetime region $M$ and denote by $L_{\partial M}$ the real vector space of data on the boundary $\partial M$ of $M$, i.e., the space of germs of solutions defined in a neighborhood of $\partial M$. The equations of motions in $M$ induce a subspace $L_M\subseteq L_{\partial M}$ of those boundary data that admit a continuation to the interior, i.e., are classically allowed in $M$.

A quantization of the theory provides a complex structure on $L_{\partial M}$ and an inner product, making it into a complex Hilbert space. The state space of the quantum theory associated to the boundary is then the Fock space $\cF(L_{\partial M})$ over $L_{\partial M}$. The usual normalized coherent states can be described through a map $\tilde{K}:L_{\partial M}\to \cF(L_{\partial M})$. That is, for any initial data $\xi$ there is a corresponding normalized coherent state $\tilde{K}(\xi)$ in the Fock space. The quantum amplitude $\rho_M$ for such a coherent state in the region $M$ is given by the following remarkably simple formula \cite{Oe:holomorphic},
\begin{equation}
  \rho_M\left(\tilde{K}(\xi)\right)=\exp\left(-\im\,\omega_{\partial M}(\xi^{\text{R}},\xi^{\text{I}})-\frac{1}{2} g_{\partial M}(\xi^{\text{I}},\xi^{\text{I}})\right) .
  \label{eq:bosampl}
\end{equation}
Here $g_{\partial M}$ denotes the real part of the inner product on $L_{\partial M}$ and $\im\,\omega_{\partial M}$ the imaginary part. $\xi=\xi^{\text{R}}+\xi^{\text{I}}$ is the decomposition of $\xi$ into a component $\xi^{\text{R}}\in L_{\partial M}$ and a component $\xi^{\text{I}}\in L_{\partial M}^\perp$, where $L_{\partial M}^\perp$ is the complement of $L_{M}$ in $L_{\partial M}$ with respect to the real part of the inner product.

What is the physical interpretation of formula (\ref{eq:bosampl})? Suppose first that we consider data on the boundary that is classically allowed in the interior, i.e., data $\xi\in L_{M}$. Then, $\xi^{\text{R}}=\xi$ and $\xi^{\text{I}}=0$, the expression inside the exponential vanishes, and the amplitude is unity. Now, we switch on a classically forbidden component $\xi^{\text{I}}\in L_M^\perp$. The first term inside the exponential is imaginary and turns on an irrelevant phase. The second term, however, is real and negative (as the inner product is positive definite), leading to an exponential suppression of the amplitude. This is precisely what one should expect in the quantum theory from a classically forbidden configuration. Viewed the other way round, the amplitude formula precisely provides us here with a semiclassical interpretation of the coherent states.

In the case of fermionic systems a semiclassical interpretation of coherent states is very little explored, not least because the very notion of a classical fermionic system appears to be problematic. In view of the bosonic example it is not unreasonable to expect that an explicit expression for the amplitude of a fermionic coherent state might lead to some insight here. Such a formula is the principal result of the present work. To get there we first need to generalize the notion of fermionic coherent state from the setting of Fock spaces to that of Fock-Krein spaces. This is due to the finding that fermionic state spaces associated to general hypersurfaces in quantum field theory are necessarily indefinite inner product spaces rather than Hilbert spaces \cite{Oe:freefermi}. More specifically, they turn out to be Krein spaces, i.e., complete direct sums of a positive definite and a negative definite part. Only in the (text book) special case of spacelike hypersurface can one consistently restrict to Hilbert spaces. Thus in Section~\ref{sec:krein} we briefly review relevant facts about Krein spaces and in Section~\ref{sec:fockkrein} the construction of Fock spaces based on them.

The coherent states are constructed following the group theoretic approach of Gilmore \cite{ZFG:coherent} and Perelomov \cite{Per:coherent}, generalizing the treatment in \cite{Oe:fermicoh}. This implies that we consider a \emph{dynamical Lie algebra} acting on Fock-Krein space, as reviewed and generalized in Section~\ref{sec:dynlalg}. The coherent states arise then through the exponentiation of a certain Lie subalgebra, see Section~\ref{sec:cohstates}. It turns out that also here the results from the Fock space setting generalize mostly straightforwardly to the Fock-Krein space setting. In particular, the fermionic Fock-Krein space is a reproducing kernel Krein space (Theorem~\ref{thm:repkkrein}). Section~\ref{sec:amplitude} contains the main results of the present work. The principal result is the amplitude formula, Theorem~\ref{thm:amplcoh}. The proof involves, among other steps, a combinatorial result in the spirit of Pólya's theory of counting. We close the section with a derivation of the inner product formula for coherent states (Theorem~\ref{thm:cohip}) from the amplitude formula. We use the framework of (fermionic) general boundary quantum field theory \cite{Oe:freefermi}, whose axioms are recalled in an appendix.

\section{Krein spaces}
\label{sec:krein}

We recall in this section essentials of the notion of Krein space, see e.g.\ \cite{Bog:indipspaces}.

\begin{dfn}
  Let $V$ be a real (or complex) topological vector space with a real (or complex) non-degenerate inner product. Assume there are orthogonal subspaces $V^+$ and $V^-$ generating $V$ such that the inner product is positive definite on $V^+$ and negative definite on $V^-$. Moreover, suppose that $V^+$ is complete with respect to the inner product and that $V^-$ is complete with respect to the negative of the inner product. Then, $V$ is called a \emph{Krein space}. We call the presentation of $V$ as the direct sum $V=V^+\oplus V^-$ a \emph{decomposition} of $V$, with $V^+$ called \emph{positive part}, and $V^-$ called \emph{negative part}. We call a Krein space together with a fixed decomposition a \emph{strict Krein space}.
\end{dfn}

Given a Krein space $V$, a decomposition induces a $\Z_2$-grading for which we use the notation
\begin{equation}
\sig{v}\defeq\begin{cases}0 & \text{if}\; v\in V^+\\
1 & \text{if}\; v\in V^-\end{cases} .
\end{equation}

\begin{dfn}
Given a Krein space $V$ with a decomposition, an \emph{adapted orthonormal basis} is an orthonormal basis of the positive part in the Hilbert space sense together with an orthonormal basis of the negative part in the Hilbert space sense for the negative of the inner product. An \emph{orthonormal basis} of $V$ is an orthonormal basis adapted to some decomposition.
\end{dfn}

We restrict in the following to \emph{separable} Krein spaces, that is, Krein spaces that admit a countable orthonormal basis. We denote the inner product of $V$ by $\{\cdot,\cdot,\}$.
Given an orthonormal basis $\{\zeta_i\}_{i\in I}$ we have the completeness relation,
\begin{equation}
  \{v,w\}=\sum_{i\in I} \{\zeta_i,\zeta_i\} \{v,\zeta_i\} \{\zeta_i,w\},\qquad\forall v,w\in V .
  \label{eq:cness}
\end{equation}
Note that this relation holds for any orthonormal basis adapted to any decomposition.

Given a decomposition a Krein space $V$ becomes a Hilbert space by changing the sign of the inner product on the negative part. This induces a norm on $V$. While the norms for different decompositions are different they are equivalent in generating the same topology. This norm in turn induces an \emph{operator norm} on the space $\clin(V)$ of continuous operators on $L$. Again, these norms are different for different decompositions, but generate the same topology. The notion of \emph{trace class operator} defined in the Hilbert space sense, turns out also to be independent of the decomposition. We denote the space of such operators by $\tcl(V)$.

The \emph{trace} of a trace class operator $\lambda\in\tcl(V)$ is given by,
\begin{equation}
 \trk(\lambda)\defeq\sum_{i\in I} \{\zeta_i,\zeta_i\} \{\zeta_i, \lambda\zeta_i\},
\end{equation}
where $\{\zeta_i\}_{i\in I}$ is an orthonormal basis of $V$. Note that this expression is invariant not only under choice of orthonormal basis adapted to a fixed decomposition, but also under choice of decomposition. This can easily be seen with the completeness property (\ref{eq:cness}).

We define the adjoint $b^\la$ of an operator $b\in\clin(V)$ with respect to the Krein space inner product,
\begin{equation}
  \{b^\la v, w\}=\{v,b w\} \quad\forall v,w\in V .
\end{equation}

Let $V$ be a real or complex Krein space.
Let $\Lambda:V\to V$ be a continuous real linear map. We say that $\Lambda$ is \emph{real anti-symmetric} if,
\begin{equation}
 \Re\{v,\Lambda w\}=-\Re\{w,\Lambda v\} \quad\forall v,w\in V .
\end{equation}
We say that $\Lambda$ is a \emph{real isometry} if,
\begin{equation}
 \Re\{\Lambda v,\Lambda w\}=\Re\{v, w\} \quad\forall v,w\in V .
\end{equation}
We say that $\Lambda$ is a \emph{real anti-isometry} if,
\begin{equation}
 \Re\{\Lambda v,\Lambda w\}=-\Re\{v,w\} \quad\forall v,w\in V .
\end{equation}
We say that $\Lambda$ is \emph{involutive} if $\Lambda^2=\one_V$, where $\one_V$ denotes the identity operator on $V$. It is easy to see that $\Lambda$ is involutive and real anti-symmetric if and only if it is involutive and a real anti-isometry. Given a decomposition we say that a real isometry is \emph{adapted} if it preserves the decomposition, while a real anti-isometry is called adapted if it interchanges the decomposition.

Let $V$ be a complex Krein space.
If $\Lambda$ is real anti-symmetric and complex conjugate-linear we simply say that $\Lambda$ is \emph{conjugate anti-symmetric}.\footnote{In \cite{Oe:fermicoh} this property was simply called ``anti-symmetric''.} It is easy to see that this is equivalent to the property,
\begin{equation}
  \{v,\Lambda w\}=-\{w,\Lambda v\} \quad\forall v,w\in V .
  \label{eq:conjas}
\end{equation}
Then, $\Lambda^2$ is complex linear, self-adjoint, and negative in the Krein sense,
\begin{gather}
 \{v,\Lambda^2 w\}=-\{\Lambda w,\Lambda v\}=\{\Lambda^2 v, w\},\qquad\forall v,w\in V, \\
 \{v,\Lambda^2 v\}=-\{\Lambda v,\Lambda v\},\qquad\forall v\in V .
\end{gather}
Let $\asl(V)$ be the vector space of conjugate anti-symmetric maps $\Lambda:V\to V$ such that $\Lambda^2$ is trace class.
Even though the elements of $\asl(V)$ are not complex linear maps, the space $\asl(V)$ itself is naturally a complex vector space. This is because given $\Lambda\in\asl(V)$, the map $\im\Lambda$ defined by $(\im \Lambda)(v)\defeq \im\, \Lambda(v)$ is also in $\asl(V)$. But note, $\im\, \Lambda(v)=-\Lambda(\im v)$.

\section{Fock-Krein spaces}
\label{sec:fockkrein}

Let $L$ be a strict complex Krein space, i.e., a complex Krein space with a fixed decomposition. We denote the inner product on $L$ by $\{\cdot,\cdot\}$. We briefly recall the construction of the Fock space $\cF(L)$ as a strict complex Krein space based on $L$. We shall follow the notation and conventions of \cite{Oe:freefermi}. The inner product on $\cF(L)$ is denoted by $\langle\cdot,\cdot\rangle$. Thus, Fock space arises as the completion of a direct sum of strict Krein spaces,
\begin{equation}
 \cF(L)=\widehat{\bigoplus_{n=0}^\infty}\,\cF_n(L) .
\end{equation}
Here, $\cF_0(L)$ is the one-dimensional complex Hilbert space isomorphic to $\bC$. It carries the standard inner product $\langle\vac,\vac\rangle=1$, where $\vac$ is a vector that generates this space, also called the \emph{vacuum state}. Moreover, $\cF_n(L)$ is the space of continuous $n$-linear maps from $n$ copies of $L$ to $\bC$, anti-symmetric under transposition of any arguments. The inner product on $\cF_n(L)$ is given by,
\begin{equation}
  \langle \eta,\psi\rangle = 2^n n! \sum_{j_1,\ldots,j_n\in I}
  \{\zeta_1,\zeta_1\}\cdots \{\zeta_n,\zeta_n\}\,
  \overline{\eta(\zeta_1,\ldots,\zeta_n)}\psi(\zeta_1,\ldots,\zeta_n) ,
\end{equation}
where $\{\zeta_j\}_{j\in I}$ is an adapted orthonormal basis of $L$. (Compare formula (50) in \cite{Oe:locqft}. The relative factor arises from the difference between using a real and a complex basis.)
$\cF_n(L)$ is a strict Krein space with decomposition $\cF_n(L)=\cF_n^+(L)\oplus\cF_n^-(L)$ given by,
\begin{align}
 \cF_n^+(L) & \defeq \{\psi\in \cF_n: \psi(\xi_1,\dots,\xi_n)=0\quad\text{if}\quad\sig{\xi_1}+\cdots+\sig{\xi_n}\;\text{odd}\}, \\
 \cF_n^-(L) & \defeq \{\psi\in \cF_n: \psi(\xi_1,\dots,\xi_n)=0\quad\text{if}\quad\sig{\xi_1}+\cdots+\sig{\xi_n}\;\text{even}\} .
\end{align}
This induces a global decomposition as a strict Krein space $\cF(L)=\cF^+(L)\oplus\cF^-(L)$. On the other hand $\cF(L)$ is an $\N$-\emph{graded} space. This grading modulo two, i.e., the induced $\Z_2$-grading is also called \emph{fermionic grading} or \emph{f-grading} for short. For this grading we introduce the notation
\begin{equation*}
  \fdg{\psi}\defeq\begin{cases}0 & \text{if}\; \psi\in \cF_{2n}(L)\\
1 & \text{if}\; \psi\in \cF_{2n+1}(L) \end{cases} .
\end{equation*}
We shall use the term \emph{real f-graded isometry} for a map on an f-graded Krein space that is a real isometry on the even part and a real anti-isometry on the odd part.

For $\tau\in L$ we define the associated \emph{creation operator} $a_\tau^\fa$ and \emph{annihilation operator} $a_\tau$ as follows for $\psi\in \cF_n(L)$,
\begin{align}
  (a_{\tau}\psi)(\eta_1,\ldots,\eta_{n-1}) & =\sqrt{2}\, n\, \psi(\tau,\eta_1,\ldots,\eta_{n-1}) , \\
  (a_{\tau}^\fa\psi)(\eta_1,\ldots,\eta_{n+1})
  & =\frac{1}{\sqrt{2}\, (n+1)} \nonumber \\
  &\quad \sum_{k=1}^{n+1} (-1)^{k-1} \{\tau,\eta_k\}\,
   \psi(\eta_1,\ldots,\eta_{k-1},\eta_{k+1},\ldots,\eta_{n+1}) . \label{eq:actcr}
\end{align}
The adjoint is defined here with respect to the Krein space inner product,
\begin{equation}
  \langle a^\fa \psi,\psi'\rangle = \langle \psi,a \psi'\rangle\quad\forall \psi,\psi'\in \cF(L) .
\end{equation}

The CAR algebra $\cA$ is the unital algebra generated by the creation and annihilation operators and with the relations,
\begin{equation}
 a_{\xi+\tau}=a_\xi+a_\tau,\quad a_{\lambda\xi}=\lambda a_{\xi},\quad
 a_{\xi} a_{\tau} + a_{\tau} a_{\xi}=0,\quad
 a_{\xi}^\fa a_{\tau}+a_{\tau} a_{\xi}^\fa=\{\xi,\tau\}\one .
\label{eq:CAR}
\end{equation}
We denote by $\cAn$ the subalgebra that preserves degree and $\cAe$ the subalgebra that only preserves f-degree. In particular, $\cAn\subseteq\cAe\subseteq\cA$.

\section{Dynamical Lie algebra and its action}
\label{sec:dynlalg}

In this section we briefly recall the dynamical Lie algebra as defined in \cite{Oe:fermicoh} and perform the straightforward generalization from the Hilbert to the Krein space case.

For any $\lambda\in\tcl(L)$ define the operator $\hat{\lambda}:\cF(L)\to\cF(L)$ by,
\begin{equation}
  \hat{\lambda}\defeq \frac{1}{2} \sum_{i\in I} \{\zeta_i,\zeta_i\}
   \left(a_{\zeta_i}^{\fa} a_{\lambda(\zeta_i)} - a_{\lambda(\zeta_i)} a_{\zeta_i}^{\fa}\right)
   = \sum_{i\in I} \{\zeta_i,\zeta_i\}\, a_{\zeta_i}^{\fa} a_{\lambda(\zeta_i)}
     - \frac{1}{2}\trk(\lambda)\one .
\end{equation}
Here $\{\zeta_i\}_{i\in I}$ is an orthonormal basis of $L$. This is a Krein space generalization of the fermionic current operators \cite{HaKa:aqft}. The $*$-structures on $\tcl(L)$ and the induced operators are in correspondence, $\hat{\lambda^\la}=\hat{\lambda}^\fa$. The vector space $\fh^\bC$ of these current operators forms a complex Lie algebra, with the Lie bracket given by the commutator,
\begin{equation}
  [\hat{\lambda},\hat{\lambda'}]=\hat{\lambda''}, \qquad \text{with}\qquad \lambda''=\lambda'\lambda-\lambda\lambda' .
  \label{eq:hrel}
\end{equation}
Thus $\tcl(L)$ is naturally anti-isomorphic to $\fh^\bC$ as a complex Lie algebra.
$\fh^\bC$ has a natural real structure induced by its action on $\cF(L)$. Namely, the real Lie subalgebra $\fh$ is generated by the elements $\hat{\lambda}$ that are skew-adjoint, i.e., satisfy $\lambda=-\lambda^\la$. We call $\fh$ as the \emph{degree-preserving dynamical Lie algebra}. Its complexified universal enveloping algebra generates $\cAn$.

For any $\Lambda\in\asl(L)$ define the operator $\hat{\Lambda}:\cF(L)\to\cF(L)$ with adjoint $\hat{\Lambda}^\fa:\cF(L)\to\cF(L)$,
\begin{equation}
  \hat{\Lambda}\defeq \frac{1}{2} \sum_{i\in I} \{\zeta_i,\zeta_i\}
   a_{\zeta_i} a_{\Lambda(\zeta_i)} ,\qquad   \hat{\Lambda}^\fa= \frac{1}{2} \sum_{i\in I} \{\zeta_i,\zeta_i\}
   a_{\Lambda(\zeta_i)}^\fa a_{\zeta_i}^\fa  .
\end{equation}
For later use we also note explicit formulas for the action of these operators. For $\psi\in\cF_n(L)$ we have,
\begin{align}
  (\hat{\Lambda}\psi)(\eta_1,\ldots,\eta_{n-2}) & = n\, (n-1) \sum_{i\in I} \{\zeta_i,\zeta_i\} \psi(\Lambda(\zeta_i),\zeta_i,\eta_1,\ldots,\eta_{n-2}) ,\\
  (\hat{\Lambda}^\fa\psi)(\eta_1,\ldots,\eta_{n+2})
  & =\frac{1}{4\, (n+2)!} \nonumber \\
  & \quad \sum_{\sigma\in S^{n+2}} (-1)^{|\sigma|} \{\Lambda \eta_{\sigma(2)},\eta_{\sigma(1)}\}\,
   \psi(\eta_{\sigma(3)},\ldots,\eta_{\sigma(n+2)}) . \label{eq:actLhat}
\end{align}
Here $S^n$ denotes the symmetric group in $n$ elements, i.e., the group of permutation of $n$ elements. For a permutation $\sigma$ we define $|\sigma|=0$ if it is an even permutation, i.e., can be obtained as a composition of an even number of transpositions, and $|\sigma|=1$ otherwise.
Given a decomposition of the Krein space $L$ we have a corresponding operator norm on $\cF(L)$ that we denote $\lop\cdot\rop$. We note,
\begin{equation}
 \lop \hat{\Lambda}\rop^2=\lop \hat{\Lambda}^\fa\rop^2=\|\hat{\Lambda}^\fa\vac\|_{\cF(L)}^2=-\frac{1}{2}\trk(\Lambda_0^2)+\frac{1}{2}\trk(\Lambda_1^2) ,
\end{equation}
where $\Lambda=\Lambda_0+\Lambda_1$ and $\Lambda_0$ preserves positive and negative parts of the decomposition of $L$ while $\Lambda_1$ interchanges them.

With the commutator on $\cF(L)$ the operators $\hat{\Lambda}$ from a complex abelian Lie algebra $\fm_+$ and the operators $\hat{\Lambda}^\fa$ form a complex abelian Lie algebra $\fm_-$. (Note that the complex structures are induced from $\asl(L)$ for $\fm_+$ and its opposite for $\fm_-$.) We denote the complex vector space spanned by both $\fm^\bC=\fm_+\oplus\fm_-$. Its natural real subspace $\fm$ is spanned by the skew-adjoint operators $\hat{\Lambda}-\hat{\Lambda}^\fa$.
The direct sum $\fge\defeq \fh\oplus\fm$ turns out to be a real Lie algebra with complexification $\fge^\bC=\fh^\bC\oplus \fm^\bC$. In addition to (\ref{eq:hrel}) the Lie brackets are as follows (other brackets vanish),
\begin{align}
 & [\hat{\lambda},\hat{\Lambda}]=\hat{\Lambda'},
  && \text{with}\qquad \Lambda'=-\lambda\Lambda-\Lambda\lambda^\la, \nonumber\\
 & [\hat{\lambda},\hat{\Lambda}^\fa]=\hat{\Lambda'}^\fa,
  && \text{with}\qquad \Lambda'= \lambda^\la\Lambda+\Lambda\lambda, \nonumber\\
 & [\hat{\Lambda'},\hat{\Lambda}^\fa]=\hat{\lambda},
  && \text{with}\qquad \lambda=\Lambda'\Lambda .
  \label{eq:frel}
\end{align}
We shall refer to $\fge$ as the \emph{even dynamical Lie algebra}. Its complexified universal enveloping algebra generates $\cAe$.

We set $\hat{\xi}\defeq 1/\sqrt{2}\, a_{\xi}$ and consider the complex vector spaces spanned by the operators $\hat{\xi},\hat{\xi}^\fa$ denoting them by $\fn_+,\fn_-$ respectively. For a decomposition of $L$ we then have,
\begin{equation}
\lop\hat{\xi}\rop^2=\lop\hat{\xi}^\fa\rop^2=\|\hat{\xi}^\fa\vac\|_{\cF(L)}^2=\frac{1}{2}\|\xi\|_L^2 .
\end{equation}
$\fn_+$ is naturally isomorphic to $L$ as a complex vector space while $\fn_-$ is naturally isomorphic to $L$ with the opposite complex structure. Denote the direct sum by $\fn^\bC\defeq\fn_+\oplus\fn_-$ and its real subspace spanned by $\hat{\xi}-\hat{\xi}^\fa$ by $\fn$. Then, $\fg\defeq \fh\oplus\fm\oplus\fn$ forms a real Lie algebra and $\fg^\bC= \fh^\bC\oplus\fm^\bC\oplus\fn^\bC$ is its complexification. In addition to (\ref{eq:hrel}) and (\ref{eq:frel}) the Lie brackets are as follows (other brackets vanish),
\begin{align}
 & [\hat{\lambda},\hat{\xi}]=\hat{\xi'},
  && \text{with}\qquad \xi'=-\lambda\xi, \nonumber \\
 & [\hat{\lambda},\hat{\xi}^\fa]=\hat{\xi'}^\fa,
  && \text{with}\qquad \xi'= \lambda^\la\xi, \nonumber \\
 & [\hat{\Lambda}^\fa,\hat{\xi}]=\hat{\xi'}^\fa,
  && \text{with}\qquad \xi'=\Lambda\xi , \nonumber \\
 & [\hat{\Lambda},\hat{\xi}^\fa]=\hat{\xi'},
  && \text{with}\qquad \xi'=-\Lambda\xi , \nonumber \\
 & [\hat{\xi},\hat{\xi'}]=\hat{\Lambda},
  && \text{with}\qquad \Lambda \eta=\xi'\{\eta,\xi\}-\xi\{\eta,\xi'\}, \nonumber \\
 & [\hat{\xi}^\fa,\hat{\xi'}^\fa]=\hat{\Lambda}^\fa,
  && \text{with}\qquad \Lambda \eta=\xi\{\eta,\xi'\}-\xi'\{\eta,\xi\}, \nonumber \\
 & [\hat{\xi}^\fa,\hat{\xi'}]=\hat{\lambda},
  && \text{with}\qquad \lambda \eta=\xi'\{\xi,\eta\}.
\end{align}
We shall refer to $\fg$ as the \emph{(full) dynamical Lie algebra}. Its complexified universal enveloping algebra generates $\cA$.

In \cite{Oe:fermicoh} an ad-invariant inner product was found for $\fg^\bC$ that reduces to a (negative) multiple of the Killing form on $\fg$ in the case that $L$ is finite-dimensional. This inner product is also well defined in the present case,
\begin{multline}
 \left(\hat{\lambda_1}+\hat{\Lambda_1}^\fa+\hat{\Lambda_1'}+\hat{\xi_1}^\fa+\hat{\xi_1'},
  \hat{\lambda_2}+\hat{\Lambda_2}^\fa+\hat{\Lambda_2'}+\hat{\xi_2}^\fa+\hat{\xi_2'} \right) \\
 =2\trk(\lambda_1^\la \lambda_2)-\trk(\Lambda_2'\Lambda_1')-\trk(\Lambda_1\Lambda_2) +2\{\xi_2,\xi_1\} +2\{\xi_1',\xi_2'\}.
\label{eq:gip}
\end{multline}
However, this inner product is here not necessarily positive definite. Rather, it makes $\fg$ and $\fg^\bC$ into a Krein spaces. Only if $L$ is positive definite is $\fg$ a compact form leading to a negative definite Killing form and positive definite inner product.

\section{Coherent states}
\label{sec:cohstates}

We recall the coherent states constructed in \cite{Oe:fermicoh}. This construction generalizes to the Krein space case. The coherent states are generated by exponentiating the action of the Lie subalgebra $\fp_{-}=\fm_{-}\oplus\fn_{-}\subseteq \fg^\bC$ on the vacuum state $\vac$ of the Fock-Krein space $\cF(L)$. Explicitly, we define the map $K:\fp_-\to \cF(L)$,
\begin{equation}
 (\Lambda,\xi) \mapsto \exp\left(\hat{\Lambda}^\fa+\hat{\xi}^\fa\right)\vac .
\end{equation}

It will be useful to have explicit expressions for the coherent states $K(\Lambda,\xi)$ in terms of their decomposition by degree. We write
\begin{equation}
  K(\Lambda,\xi)=\sum_{n=0}^\infty K_n(\Lambda,\xi),
\end{equation}
where $K_n(\Lambda,\xi)\in\cF_n(L)$. Using commutation relations and formulas (\ref{eq:actcr}) and (\ref{eq:actLhat}) it is then straightforward to obtain,
\begin{align}
  & K_{2n}(\Lambda,\xi)(\eta_1,\ldots,\eta_{2n})  =\left(\frac{1}{n!}(\hat{\Lambda}^\fa)^{n}\psi_0\right)(\eta_1,\ldots,\eta_{2n}) \nonumber \\
  & \quad =\frac{1}{2^{2n}\, n!\,(2n)!} \sum_{\sigma\in S^{2n}} (-1)^{|\sigma|} \prod_{k=1}^n \{\Lambda \eta_{\sigma(2k)},\eta_{\sigma(2k-1)}\}, \label{eq:coheven} \\
  & K_{2n+1}(\Lambda,\xi)(\eta_1,\ldots,\eta_{2n+1}) =\left(\frac{n+1}{(n+1)!}\hat{\xi}(\hat{\Lambda}^\fa)^{n}\psi_0\right)(\eta_1,\ldots,\eta_{2n+1}) \nonumber \\
  & \quad =\frac{1}{2^{2n+1}\, n!\,(2n+1)!} \sum_{\sigma\in S^{2n+1}} (-1)^{|\sigma|} \{\xi,\eta_{\sigma(1)}\} \prod_{k=1}^n \{\Lambda \eta_{\sigma(2k+1)},\eta_{\sigma(2k)}\} . \label{eq:cohodd}
\end{align}

We proceed to list results from \cite{Oe:fermicoh} that generalize from the Hilbert to the Krein setting. For those results where no further remarks are made the generalization is straightforward and a proof is thus omitted.
\begin{prop}[{\cite[Proposition~7.1]{Oe:fermicoh}}]
The map $K$ is continuous, holomorphic and injective.
\end{prop}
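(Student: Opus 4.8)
The plan is to establish the three properties—continuity, holomorphy, and injectivity—separately, exploiting the explicit degree-graded formulas (\ref{eq:coheven}) and (\ref{eq:cohodd}) together with the operator-norm estimates recorded in Section~\ref{sec:dynlalg}.

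For continuity and holomorphy, I would first fix a decomposition of $L$, turning it into a Hilbert space and inducing the operator norm $\lop\cdot\rop$ and the norms $\|\cdot\|_L$, $\|\cdot\|_{\cF(L)}$. The key input is that $\hat{\Lambda}^\fa$ and $\hat{\xi}^\fa$ are bounded operators with norms controlled by $\|\Lambda\|$ (via $\trk(\Lambda_0^2)$, $\trk(\Lambda_1^2)$) and $\|\xi\|_L$. Since $K(\Lambda,\xi)=\exp(\hat{\Lambda}^\fa+\hat{\xi}^\fa)\vac$, I would bound $\|K(\Lambda,\xi)\|_{\cF(L)}\le \exp(\lop\hat{\Lambda}^\fa\rop+\lop\hat{\xi}^\fa\rop)$ and then show local Lipschitz or locally uniform convergence of the exponential series on bounded subsets of $\fp_-$. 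Continuity follows because the partial sums are polynomial (hence continuous) in $(\Lambda,\xi)$ and converge locally uniformly. For holomorphy, I would observe that each degree component $K_n(\Lambda,\xi)$ is, by (\ref{eq:coheven})–(\ref{eq:cohodd}), a polynomial in the entries of $\Lambda$ and $\xi$ that depends \emph{holomorphically} (complex-linearly in each factor, with no conjugates appearing), and then invoke locally uniform convergence to conclude that the limit $K$ is holomorphic as a map into the complex Krein space $\cF(L)$.

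For injectivity, I would read off the lowest nontrivial degree components directly from the explicit formulas. From (\ref{eq:cohodd}) with $n=0$ we get $K_1(\Lambda,\xi)(\eta)=\tfrac{1}{2}\{\xi,\eta\}$, which recovers $\xi$ uniquely from $K_1$ by non-degeneracy of the Krein inner product $\{\cdot,\cdot\}$. Similarly, from (\ref{eq:coheven}) with $n=1$ we get $K_2(\Lambda,\xi)(\eta_1,\eta_2)=\tfrac{1}{8}(\{\Lambda\eta_2,\eta_1\}-\{\Lambda\eta_1,\eta_2\})$, and using the conjugate anti-symmetry (\ref{eq:conjas}) this equals $\tfrac{1}{4}\{\Lambda\eta_2,\eta_1\}$, from which $\Lambda$ is recovered uniquely, again by non-degeneracy. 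Hence if $K(\Lambda,\xi)=K(\Lambda',\xi')$, comparing the degree-one and degree-two parts forces $\xi=\xi'$ and $\Lambda=\Lambda'$, so $K$ is injective.

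The main obstacle I expect is the holomorphy claim in the Krein (rather than Hilbert) setting. Holomorphy of a Banach-space-valued map is a norm-topological notion, so one must check that the conclusion is independent of the decomposition used to induce the norm; this is where the remark that different decompositions give equivalent norms generating the same topology is essential. The delicate point is ensuring that the exponential series converges in the operator-norm topology uniformly on bounded sets despite the indefiniteness—but since the norm estimate $\lop\hat{\Lambda}^\fa\rop^2=-\tfrac12\trk(\Lambda_0^2)+\tfrac12\trk(\Lambda_1^2)$ is already a genuine (nonnegative) norm after fixing a decomposition, the analytic estimates proceed exactly as in the Hilbert case, and the indefiniteness enters only through bookkeeping. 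The continuity and injectivity parts are comparatively routine once the explicit formulas are in hand.
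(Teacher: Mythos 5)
The paper itself contains no proof of this proposition: it belongs to the list of results imported from \cite{Oe:fermicoh} for which the paper states that ``the generalization is straightforward and a proof is thus omitted''. Your self-contained argument therefore does more than the paper does, and its overall architecture is sound. The continuity argument (fix a decomposition, control $\lop\hat{\Lambda}^\fa\rop$ and $\lop\hat{\xi}^\fa\rop$ by the norm identities of Section~\ref{sec:dynlalg}, and conclude from locally uniform convergence of the exponential series, whose partial sums are continuous polynomials) is exactly what those norm identities are set up to deliver, and your remark that different decompositions give equivalent norms correctly disposes of the decomposition-dependence worry. The injectivity argument is clean and correct: from (\ref{eq:cohodd}) with $n=0$ one gets $K_1(\Lambda,\xi)(\eta)=\frac{1}{2}\{\xi,\eta\}$, from (\ref{eq:coheven}) with $n=1$ together with conjugate anti-symmetry one gets $K_2(\Lambda,\xi)(\eta_1,\eta_2)=\frac{1}{4}\{\Lambda\eta_2,\eta_1\}$, and non-degeneracy of the inner product (which is part of the definition of a Krein space, so survives the loss of positive definiteness) recovers $\xi$ and $\Lambda$ from the degree-one and degree-two components.

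One step, however, is wrong as literally stated and needs repair: the claim that in (\ref{eq:coheven})--(\ref{eq:cohodd}) the dependence on $\Lambda$ and $\xi$ is complex-linear ``with no conjugates appearing''. The inner product is conjugate-linear in its \emph{first} argument (this is forced by $a_{\lambda\xi}=\lambda a_\xi$ and the CAR relations), so $\{\xi,\eta\}$ and $\{\Lambda\eta_2,\eta_1\}$ are conjugate-linear in $\xi$ and in $\Lambda$ with respect to the natural complex structures on $L$ and $\asl(L)$; taken at face value your argument would prove that $K$ is \emph{anti}-holomorphic. What saves the statement is precisely the point recorded in Section~\ref{sec:dynlalg}: $\fm_-$ and $\fn_-$ carry the \emph{opposite} complex structures to $\asl(L)$ and $L$ (multiplication by $i$ on the operator $\hat{\xi}^\fa$ corresponds to $\xi\mapsto -i\xi$, since $a^\fa_\tau$ is conjugate-linear in $\tau$), and holomorphy of $K$ is asserted with respect to the complex structure of $\fp_-$, under which the coefficient functionals become genuinely complex-linear. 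The cleanest fix, consistent with the rest of your proof, is to work directly with the variable $X=\hat{\Lambda}^\fa+\hat{\xi}^\fa\in\fp_-$: then $K(X)=\exp(X)\vac=\sum_{m=0}^\infty\frac{1}{m!}X^m\vac$ is a locally uniformly convergent power series whose terms are continuous homogeneous polynomials in $X$, and continuity and holomorphy (in the Banach-space sense, for any of the equivalent norms) follow simultaneously, with the indefiniteness entering only through the norm identities, exactly as you anticipate in your closing paragraph.
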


We can evaluate the inner product of coherent states in terms of a Fredholm determinant \cite{Sim:traceideals}, generalized here to Krein spaces.

\begin{thm}[{\cite[Proposition~7.3]{Oe:fermicoh}}]
\label{thm:cohip}
Let $\Lambda,\Lambda'\in\asl(L)$ and $\xi,\xi'\in L$. Assume moreover $\lop \Lambda \Lambda'\rop<1$. Set
\begin{equation}
  b\defeq\{\xi',(\one_L-\Lambda\Lambda')^{-1}\xi\} .
  \label{eq:ipb}
\end{equation}
Then,
\begin{equation}
\langle K(\Lambda,\xi),K(\Lambda',\xi')\rangle=
 \left(1+\frac{1}{2}b\right)
\left(\det \left(\one_L-\Lambda\Lambda'\right)\right)^{\frac{1}{2}}.
\label{eq:kip}
\end{equation}
The correct branch of the square root is obtained by analytic continuation from $\Lambda=\Lambda'$.
\end{thm}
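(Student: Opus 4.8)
The plan is to evaluate the inner product directly from the explicit expansions (\ref{eq:coheven}) and (\ref{eq:cohodd}), adapting the Hilbert-space computation of \cite{Oe:fermicoh} to the Krein setting by carrying the signs $\{\zeta_i,\zeta_i\}=\pm 1$ of an adapted orthonormal basis through every basis sum. Since the $\N$-degree components of $\cF(L)$ are mutually orthogonal, the inner product splits as
\begin{equation*}
\langle K(\Lambda,\xi),K(\Lambda',\xi')\rangle=\sum_{n=0}^\infty\langle K_{2n}(\Lambda,\xi),K_{2n}(\Lambda',\xi')\rangle+\sum_{n=0}^\infty\langle K_{2n+1}(\Lambda,\xi),K_{2n+1}(\Lambda',\xi')\rangle .
\end{equation*}
By (\ref{eq:coheven}) the even summands depend only on $\Lambda$ and $\Lambda'$, so I expect the even series to reproduce the determinant factor by itself, while by (\ref{eq:cohodd}) each odd summand carries exactly one factor of $\xi$ and one of $\xi'$, so the odd series should produce precisely the term proportional to $b$.

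For the even series I would first observe that, as a consequence of the conjugate anti-symmetry (\ref{eq:conjas}), the matrix $\{\Lambda\eta_i,\eta_j\}$ is anti-symmetric, so that $K_{2n}(\Lambda)(\eta_1,\dots,\eta_{2n})$ in (\ref{eq:coheven}) is, up to an explicit constant, the Pfaffian of this matrix. Inserting two such Pfaffians into the $\cF_{2n}(L)$ inner product and using the completeness relation (\ref{eq:cness}) reduces the multiple basis sum to a sum over ways of jointly pairing $2n$ points by $\Lambda$ and by $\Lambda'$; each such joint pairing decomposes into disjoint cycles alternating between $\Lambda$- and $\Lambda'$-contractions, and a single cycle through $2k$ points contributes $\trk((\Lambda\Lambda')^{k})$. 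Here the Krein signs $\{\zeta_i,\zeta_i\}$ are exactly what converts each closed basis sum into the Krein trace. Summing over $n$ and over all cycle structures is the Pólya-type counting step: the total sum exponentiates the single-cycle contributions, giving $\exp\!\left(-\tfrac12\sum_{k\ge 1}\tfrac1k\,\trk((\Lambda\Lambda')^{k})\right)=\left(\det(\one_L-\Lambda\Lambda')\right)^{1/2}$.

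The odd series is handled by the same device, now with two distinguished legs. The factor $\{\xi,\eta_{\sigma(1)}\}$ in (\ref{eq:cohodd}) and its counterpart carrying $\xi'$ cannot be absorbed into a cycle; instead they must be joined by a single open chain alternating $\Lambda$- and $\Lambda'$-contractions, contributing $\{\xi',(\Lambda\Lambda')^{k}\xi\}$ at order $k$, while the remaining points again close into cycles. Summing the chain over $k\ge 0$ gives the Neumann series $\{\xi',(\one_L-\Lambda\Lambda')^{-1}\xi\}=b$, and the residual cycles resum to the same determinant as before; keeping track of the overall constant produces the factor $\tfrac12$, so the odd series equals $\tfrac12\,b\left(\det(\one_L-\Lambda\Lambda')\right)^{1/2}$. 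Adding the even and odd contributions yields (\ref{eq:kip}).

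Finally I would check convergence and the branch. The hypothesis $\lop\Lambda\Lambda'\rop<1$ ensures that $\Lambda\Lambda'$ is trace class --- since $\Lambda^2,\Lambda'^2\in\tcl(L)$ makes $\Lambda,\Lambda'$ Hilbert--Schmidt --- so the Fredholm determinant is well defined, and it ensures absolute convergence of the Neumann series and of the cycle and chain resummations, justifying the rearrangements. Because the trace, the operator norm and the trace-class ideal on a Krein space are independent of the chosen decomposition, every step is insensitive to the decomposition exactly as in the Hilbert case. The one genuinely new point is that $\det(\one_L-\Lambda\Lambda')$ is now in general complex, so its square root requires a choice of branch; the resummation delivers the answer directly as the globally single-valued holomorphic function $\exp\!\left(-\tfrac12\sum_{k\ge1}\tfrac1k\trk((\Lambda\Lambda')^k)\right)$ on the connected domain $\lop\Lambda\Lambda'\rop<1$, the branch equal to $1$ at $\Lambda\Lambda'=0$, which on that domain coincides with the branch obtained by analytic continuation from the diagonal $\Lambda=\Lambda'$. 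I expect the combinatorial resummation --- organizing the joint pairings into cycles and the single open chain, and fixing the constants and signs --- to be the main obstacle, the convergence and branch bookkeeping being routine given the stated Krein facts.
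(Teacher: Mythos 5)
Your proposal is correct in outline, but it takes a genuinely different route from the paper's own proof. What you describe---computing $\langle K(\Lambda,\xi),K(\Lambda',\xi')\rangle$ degree by degree from (\ref{eq:coheven}) and (\ref{eq:cohodd}), with closed cycles of alternating $\Lambda$/$\Lambda'$ contractions producing the determinant and a single open chain producing $b$---is a Krein-sign adaptation of the Hilbert-space argument of \cite[Proposition~7.3]{Oe:fermicoh}, which the paper explicitly declines to follow, remarking that its own Krein-space proof is very different. The paper instead obtains Theorem~\ref{thm:cohip} as a corollary of the amplitude formula, Theorem~\ref{thm:amplcoh}, via the GBQFT axiom \textbf{(T3x)}: the inner product on $\cF(L_\Sigma)$ is realized as the amplitude of the slice region $\hat{\Sigma}$, so that by (\ref{eq:iotacoh}) and (\ref{eq:taucoh}) the left-hand side of (\ref{eq:kip}) equals $\rho_{\hat{\Sigma}}(K(\underline{\Lambda}+\tilde{\Lambda},-\xi_{\overline{\Sigma}}+\xi'_{\Sigma}))$, where $\underline{\Lambda}=\Lambda_{\overline{\Sigma}}+\Lambda'_{\Sigma}$ and $\tilde{\Lambda}$ is a rank-two operator built from $\xi$ and $\xi'$. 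Starting from the trace form (\ref{eq:amplcohtr}) of that amplitude, the paper expands in powers of the perturbation $u\tilde{\Lambda}$, exploits the block structure of $L_{\partial\hat{\Sigma}}=L_{\overline{\Sigma}}\oplus L_{\Sigma}$ (traces of odd powers of $u\underline{\Lambda}$ vanish, as do insertions separated by odd powers), and resums: the insertion-free part gives $\det(\one-\Lambda\Lambda')^{1/2}$, while the $u\tilde{\Lambda}$ insertions give $\exp\bigl(\ln\bigl(1+\tfrac{1}{2}b\bigr)\bigr)=1+\tfrac{1}{2}b$. Thus the hard pairing combinatorics (the cycle-index lemma) is done only once, in the proof of Theorem~\ref{thm:amplcoh}, and the chain contributions that you handle combinatorially are handled there algebraically, as traces of words in $u\underline{\Lambda}$ and the rank-two $u\tilde{\Lambda}$. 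Your route buys self-containedness---no slice regions or axioms are needed, and linearity in $b$ is manifest because each odd Fock degree contains exactly one open chain, whereas in the paper it emerges from a cancellation between exponential and logarithm; the paper's route buys economy and exhibits the internal consistency of the GBQFT framework (inner product as a special amplitude). Two smaller points: trace-classness of $\Lambda\Lambda'$ follows from $\Lambda,\Lambda'\in\asl(L)$ alone (both are Hilbert--Schmidt), not from the hypothesis $\lop\Lambda\Lambda'\rop<1$, which is needed only for the Neumann and logarithm series; and the resummation step you defer---fixing multiplicities so that cycles exponentiate---is exactly the content of the paper's P\'olya-type lemma, so in a complete write-up it would have to be proved, not treated as routine.
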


A proof of this result in the Krein setting is more involved and we shall present one
at the end of Section~\ref{sec:amplitude}. This proof is very different from the one given for the Hilbert space setting in \cite{Oe:fermicoh}. In particular, the latter does not require a restriction on the operator norm of $\Lambda \Lambda'$. This suggests that it should be possible to relax this restriction also in the present Krein space version. We shall not explore this issue further in the present paper.

\begin{prop}[{\cite[Proposition~7.4]{Oe:fermicoh}}]
\label{prop:kdense}
The image of $K$ spans a dense subspace of $\cF(L)$.
\end{prop}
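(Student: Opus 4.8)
The goal is to show that the image of $K$ spans a dense subspace of $\cF(L)$. The plan is to argue by contradiction: suppose the closed span of $\{K(\Lambda,\xi)\}$ is not all of $\cF(L)$. Since $\cF(L)$ is a Krein space and thus, after fixing a decomposition, a Hilbert space in the associated positive-definite norm, a proper closed subspace has a nonzero orthogonal complement with respect to that Hilbert inner product. Because the Krein inner product $\langle\cdot,\cdot\rangle$ differs from the Hilbert inner product only by the nondegenerate fundamental symmetry, orthogonality in one sense transfers to a nontrivial orthogonality condition in the other. Hence it suffices to show that the only vector $\psi\in\cF(L)$ satisfying $\langle\psi,K(\Lambda,\xi)\rangle=0$ for all $(\Lambda,\xi)\in\fp_-$ is $\psi=0$.

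First I would exploit the holomorphy and the explicit degree decomposition. Fix $\xi=0$ and rescale $\Lambda\mapsto t\Lambda$; then $K(t\Lambda,0)=\sum_n t^n K_{2n}(\Lambda,0)$ is, by Proposition~7.1 (the cited continuity/holomorphy result) and formulas (\ref{eq:coheven}) and (\ref{eq:cohodd}), a convergent power series in $t$ whose inner product against $\psi$ must vanish identically in $t$. Matching coefficients forces $\langle\psi,K_{2n}(\Lambda,0)\rangle=0$ for every $n$ and every $\Lambda\in\asl(L)$; more generally, separately scaling $\Lambda$ and $\xi$ and matching bidegrees forces $\langle\psi,K_m(\Lambda,\xi)\rangle=0$ for each homogeneous degree $m$. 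This reduces the problem to each graded piece $\cF_m(L)$ separately, so it is enough to prove that the vectors $K_m(\Lambda,\xi)$ span a dense subspace of $\cF_m(L)$ for each $m$.

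The next step is to show this degreewise density. From (\ref{eq:coheven}), $K_{2n}(\Lambda,0)$ is built from antisymmetrized products of the bilinear form $(\eta,\eta')\mapsto\{\Lambda\eta,\eta'\}$; as $\Lambda$ ranges over $\asl(L)$, these bilinear forms range over a large family of conjugate anti-symmetric forms, and I would argue that their antisymmetrized $n$-fold products already generate all of $\cF_{2n}(L)$. Concretely, by polarization one may replace a single $\Lambda$ by sums $\sum_j t_j\Lambda_j$ and again extract coefficients, so that arbitrary products $\{\Lambda_1\eta_{\sigma(2)},\eta_{\sigma(1)}\}\cdots\{\Lambda_n\eta_{\sigma(2n)},\eta_{\sigma(2n-1)}\}$ become available. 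Choosing each $\Lambda_j$ to single out a pair of adapted basis vectors $\zeta_{i},\zeta_{i'}$ (a rank-two conjugate anti-symmetric form), one realizes the antisymmetrizations of the standard basis tensors of $\cF_{2n}(L)$, which span it. For odd degree the factor $\{\xi,\eta_{\sigma(1)}\}$ in (\ref{eq:cohodd}), with $\xi$ free, supplies the remaining odd leg. This is the combinatorial heart of the argument.

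The main obstacle I anticipate is precisely this last combinatorial/linear-algebra step in infinite dimensions: one must verify that the rank-two building blocks really do generate every antisymmetric multilinear form, and that the separable completeness is respected so that one obtains density rather than merely algebraic spanning on finite-rank pieces. The Krein setting adds the technical nuisance that the indefinite inner product must be handled via a fixed decomposition and its associated Hilbert norm, so I would carry the orthogonality through the fundamental symmetry carefully and confirm that vanishing of $\langle\psi,K_m(\Lambda,\xi)\rangle$ on the algebraically generated (finite-rank) vectors extends by continuity to force $\psi=0$ on the completion. Since the referenced result is a straightforward generalization of \cite[Proposition~7.4]{Oe:fermicoh}, I expect these points to be routine once the coefficient-matching and rank-two reduction are set up, but they are where the real content lies.
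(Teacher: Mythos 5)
The paper itself offers no proof of this proposition: it is one of the results imported from \cite{Oe:fermicoh} under the blanket remark that the generalization from the Hilbert to the Krein setting is straightforward, so the only comparison possible is with that omitted argument. Your proof is correct, and it is essentially the natural (and presumably the original) one. All three of your ingredients check out. First, the reduction through the fundamental symmetry is legitimate: if the closed span were proper, a nonzero Hilbert-orthogonal vector $\phi$ gives the nonzero Krein-orthogonal vector $J\phi$, so it suffices to show that $\langle\psi,K(\Lambda,\xi)\rangle=0$ for all $(\Lambda,\xi)\in\fp_-$ forces $\psi=0$. Second, by (\ref{eq:coheven}) and (\ref{eq:cohodd}) the component $K_{2n}(\Lambda,\xi)=\frac{1}{n!}(\hat{\Lambda}^\fa)^n\vac$ is independent of $\xi$ while $K_{2n+1}(\Lambda,\xi)$ is linear in $\xi$ and of degree $n$ in $\Lambda$, so scaling both arguments and matching coefficients isolates every fixed degree, as you claim. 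Third, the polarization step is valid because the operators $\hat{\Lambda}^\fa$ commute ($\fm_-$ is abelian), and your rank-two reduction is confirmed by the paper's own bracket $[\hat{\xi}^\fa,\hat{\xi'}^\fa]=\hat{\Lambda}^\fa$ with $\Lambda\eta=\xi\{\eta,\xi'\}-\xi'\{\eta,\xi\}$: such $\Lambda$ lies in $\asl(L)$ (it is conjugate anti-symmetric with $\Lambda^2$ of finite rank, hence trace class), and iterating gives all states $a^\fa_{\xi_1}\cdots a^\fa_{\xi_{2n}}\vac$, with the extra $\xi$-leg supplying the odd degrees; these span a dense subspace of each $\cF_m(L)$ in the Krein topology, since that topology coincides with the Hilbert topology of any decomposition. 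Continuity and non-degeneracy of the Krein inner product then give $\psi=0$. The only wrinkle worth noting is cosmetic: since $\fm_-$ carries the complex structure opposite to that of $\asl(L)$, the function $t\mapsto\langle\psi,K(t\Lambda,\xi)\rangle$ is an entire power series in $\bar{t}$ rather than in $t$; coefficient matching is unaffected, so this is not a gap.
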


Given $\psi\in\cF(L)$ define the function
\begin{equation}
f_\psi:\fp_-\to\bC\quad\text{by}\quad f_\psi(\Lambda,\xi)\defeq \langle K(\Lambda,\xi),\psi\rangle.
\label{eq:defwf}
\end{equation}
Then, $f_\psi$ is continuous and anti-holomorphic \cite[Proposition~7.5]{Oe:fermicoh}.
Denote by $\hol(\fp_-)$ the complex vector space of continuous and anti-holomorphic functions on $\fp_-$. The complex linear map
\begin{equation}
 f:\cF(L)\to\hol(\fp_-)\quad\text{given by}\quad
 \psi \mapsto f_\psi 
\end{equation}
is injective \cite[Lemma~7.6]{Oe:fermicoh}.
Let $F(\fp_-)\subseteq\hol(\fp_-)$ denote the image of $f$.

\begin{thm}[{\cite[Theorem~7.7]{Oe:fermicoh}}]
  \label{thm:repkkrein}
The complex linear isomorphism
\begin{equation}
 f:\cF(L)\to F(\fp_-) 
\end{equation}
realizes the Fock space $\cF(L)$ as a \emph{reproducing kernel Krein space} of continuous anti-holomorphic functions on the Krein space $\fp_-$ with reproducing kernel $K:\fp_-\times\fp_-\to\bC$,
\begin{equation}
 K\left((\Lambda,\xi),(\Lambda',\xi')\right)= \langle K(\Lambda,\xi),K(\Lambda',\xi')\rangle
\end{equation}
given by equation (\ref{eq:kip}) of Theorem~\ref{thm:cohip}. In particular, the reproducing property is equation (\ref{eq:defwf}).
\end{thm}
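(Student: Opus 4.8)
The plan is to recognize that the statement mostly repackages facts already in hand --- the injectivity of $f$ (\cite[Lemma~7.6]{Oe:fermicoh}), and the continuity and anti-holomorphicity of each $f_\psi$ --- and then to supply the Krein-space structure on $F(\fp_-)$ together with the verification of the reproducing-kernel axioms. First I would equip $F(\fp_-)$ with the inner product pushed forward along $f$: since $f:\cF(L)\to F(\fp_-)$ is a complex-linear bijection onto its image, setting $\langle f_\psi, f_{\psi'}\rangle_{F(\fp_-)}\defeq\langle\psi,\psi'\rangle$ makes $f$ an isometric isomorphism. Because $\cF(L)$ is a Krein space and $f$ is an isometric bijection, $F(\fp_-)$ inherits the Krein-space structure, a compatible decomposition being the $f$-image of any decomposition of $\cF(L)$, and the associated majorant Hilbert topology is transported along with it.

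Next I would verify that every evaluation functional is continuous. Fixing $(\Lambda,\xi)\in\fp_-$, the functional $g\mapsto g(\Lambda,\xi)$ on $F(\fp_-)$ pulls back along $f$ to the map $\psi\mapsto f_\psi(\Lambda,\xi)=\langle K(\Lambda,\xi),\psi\rangle$, i.e.\ the inner product against the fixed vector $K(\Lambda,\xi)\in\cF(L)$. Such a functional is continuous in the majorant Hilbert topology of $\cF(L)$ (by Cauchy--Schwarz in the majorant norm, which generates a decomposition-independent topology as recalled in Section~\ref{sec:krein}), so the evaluation functional on $F(\fp_-)$ is continuous. This is exactly the condition required for a space of functions to carry a reproducing kernel.

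The kernel is then exhibited directly rather than only abstractly. For fixed $(\Lambda,\xi)$ the representer of the evaluation functional is the image $f_{K(\Lambda,\xi)}=f\bigl(K(\Lambda,\xi)\bigr)\in F(\fp_-)$ of the coherent state: for $g=f_\psi$ one computes $\langle f_{K(\Lambda,\xi)}, g\rangle_{F(\fp_-)}=\langle K(\Lambda,\xi),\psi\rangle=f_\psi(\Lambda,\xi)=g(\Lambda,\xi)$, which is the reproducing property and coincides with equation (\ref{eq:defwf}). Evaluating this representer at a second point gives $f_{K(\Lambda,\xi)}(\Lambda',\xi')=\langle K(\Lambda',\xi'),K(\Lambda,\xi)\rangle$, so the reproducing kernel is $K\bigl((\Lambda',\xi'),(\Lambda,\xi)\bigr)=\langle K(\Lambda',\xi'),K(\Lambda,\xi)\rangle$ as claimed, its closed form being supplied by equation (\ref{eq:kip}) of Theorem~\ref{thm:cohip}; anti-holomorphicity in the first slot and holomorphicity in the second are consistent with $f_{K(\Lambda,\xi)}\in F(\fp_-)$.

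The part that requires the most care is not any computation but ensuring the reproducing-kernel formalism survives the passage from Hilbert to Krein spaces: one must check that \emph{continuity} and \emph{representer} are always taken with respect to the decomposition-independent majorant topology, so that the kernel is an honest invariant of the Krein space and not an artifact of a chosen fundamental symmetry. Because the representer $f_{K(\Lambda,\xi)}$ is produced explicitly --- and is unique, by non-degeneracy of $\langle\cdot,\cdot\rangle$ together with injectivity of $f$ --- one avoids invoking any indefinite Riesz representation theorem, and the only genuinely indefinite input is that the kernel $K$ need no longer be positive, which is precisely what a reproducing kernel Krein space is meant to accommodate.
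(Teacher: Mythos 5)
Your proposal is correct and is, in essence, exactly the argument the paper has in mind: the paper prints no proof of Theorem~\ref{thm:repkkrein}, declaring it (like the other unremarked items in Section~\ref{sec:cohstates}) a straightforward generalization of \cite[Theorem~7.7]{Oe:fermicoh}, and it assembles immediately beforehand precisely the ingredients you invoke --- injectivity of $f$, continuity and anti-holomorphy of each $f_\psi$, density of the coherent states, and the kernel formula (\ref{eq:kip}) of Theorem~\ref{thm:cohip}. Transporting the Krein inner product along $f$, checking continuity of evaluations in the decomposition-independent majorant topology, and exhibiting the representer of evaluation explicitly as $f_{K(\Lambda,\xi)}$ --- so that the reproducing property is literally equation~(\ref{eq:defwf}) and no indefinite Riesz theorem is needed --- is that straightforward generalization.
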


\section{The amplitude}
\label{sec:amplitude}

The physics in a time-interval $[t_1,t_2]$ can be conveniently encoded in the transition amplitudes between initial states at $t_1$ and final states at $t_2$. In quantum field theory, this is commonly calculated using the Feynman path integral. The S-matrix is an asymptotic version of this. This way of encoding physics can be generalized to spacetime regions that do not have the special form of a time-interval. Amplitudes are then linear maps from a state space associated to the boundary of the region to the complex numbers. Transition amplitudes arise as special cases for time-interval regions. It turns out that the description of fermionic systems in particular becomes richer and more interesting in this generalized setting \cite{Oe:freefermi}, also known as as \emph{general boundary quantum field theory (GBQFT)} \cite{Oe:gbqft}. A particularly striking fact is that fermionic state spaces are generally Krein spaces and not Hilbert spaces.

A quantum field theory in this context may be encoded through structures that satisfy a system of axioms. Spacetime is encoded in terms of a collection of hypersurfaces and regions, i.e., oriented submanifolds of dimensions $n-1$ and $n$. There are operations of \emph{gluing} regions and \emph{decomposing} hypersurfaces. The most important structures that determine a fermionic quantum field theory are: an assignment of an f-graded Krein space $\cH_{\Sigma}$ to each hypersurface $\Sigma$ and an assignment of an f-graded amplitude map $\rho_M:\cH_{\partial M}^\ds\to\bC$ to each region $M$ with boundary $\partial M$. (Here $\cH_{\partial M}^\ds$ is a dense subspace of $\cH_{\partial M}$.) We include the list of axioms in the appendix, but refer the reader to \cite{Oe:freefermi} for a complete explanation.

We are interested in the following in the case of a fermionic quantum field theory that arises as the quantization of a classical linear field theory, also described in \cite{Oe:freefermi}.
The classical data include a Krein space $L_{\Sigma}$ assigned to every oriented hypersurfaces $\Sigma$. This plays the role of the classical (fermionic!) phase space. Crucially, the space $L_{\overline{\Sigma}}$, associated to the hypersurface with opposite orientation, $\overline{\Sigma}$, is the same space, but with the opposite complex structure and inner product given by,
\begin{equation}
  \{\xi,\eta\}_{\overline{\Sigma}}=-\overline{\{\xi,\eta\}_{\Sigma}} .
\end{equation}
For later use we note that a map $\Lambda\in\asl(L_{\Sigma})$ is also canonically an element of $\asl(L_{\overline{\Sigma}})$. We shall implicitly use this fact in the following.

The state space $\cH_{\Sigma}$ associated to a hypersurface $\Sigma$ in the quantum theory is the fermionic Fock space $\cF(L_{\Sigma})$ of Section~\ref{sec:fockkrein}.
Recall (Axiom \textbf{(T1b)} of the appendix) that there is a map $\iota_{\Sigma}:\cF(L_{\Sigma})\to\cF(L_{\overline{\Sigma}})$ that identifies the state spaces associated to the two different orientations of the same hypersurface $\Sigma$. This map is a conjugate-linear adapted real f-graded isometry given on $\psi\in\cF_n(L_{\Sigma})$ by,
\begin{equation}
 (\iota_{\Sigma}(\psi))(\xi_1,\dots,\xi_n)= \overline{\psi(\xi_n,\dots,\xi_1)} .
\end{equation} 
Using expressions (\ref{eq:coheven}) and (\ref{eq:cohodd}) this works out for coherent states too,
\begin{equation}
  \iota_{\Sigma}(K(\Lambda,\xi))=K(\Lambda,-\xi) .
  \label{eq:iotacoh}
\end{equation}

Let $\Sigma=\Sigma_1\cup\Sigma_2$ be a hypersurface decomposition. The classical data then satisfy $L_{\Sigma}=L_{\Sigma_1}\oplus L_{\Sigma_2}$. Recall (Axiom \textbf{(T2)} of the appendix) that associated to this is an isometric isomorphism of f-graded Krein spaces, $\tau_{\Sigma_1,\Sigma_2;\Sigma}:\cF(L_{\Sigma_1})\ctprod\cF(L_{\Sigma_2})\to\cF(L_\Sigma)$. (Here, $\ctprod$ denotes the completed tensor product.) For $\psi\in\cF_m(L_{\Sigma_1})$ and $\psi_2\in\cF_n(L_{\Sigma_2})$ this is given by,
\begin{multline}
 \left(\tau_{\Sigma_1,\Sigma_2;\Sigma}(\psi_1\tprod\psi_2)\right)\left((\eta_1,\xi_1),\dots,(\eta_{m+n},\xi_{m+n})\right)\\
= \frac{1}{(m+n)!}\sum_{\sigma\in S^{m+n}}(-1)^{|\sigma|}
 \psi_1(\eta_{\sigma(1)},\dots,\eta_{\sigma(m)})\psi_2(\xi_{\sigma(m+1)},\dots,\xi_{\sigma(m+n)}) .
\end{multline}
For coherent states this map takes the following form, as can be verified straightforwardly,
\begin{align}
    & \tau_{\Sigma_1,\Sigma_2;\Sigma}(K(\Lambda,\xi)\tprod K(\Lambda',\xi'))
  = K(\Lambda+\Lambda'+\tilde{\Lambda},\xi+\xi'), \label{eq:taucoh} \\
    & \quad\text{where}\quad
    \tilde{\Lambda}(\eta)\defeq\frac{1}{2}\left(\xi\{\eta,\xi'\}-\xi'\{\eta,\xi\}\right) .
\end{align}
Note that $\Lambda$ is extended here implicitly from an element in $\asl(L_{\Sigma_1})$ to en element in $\asl(L_{\Sigma})$ and correspondingly for $\Lambda'$.

We now come to the object of principal interest of this section, the quantum amplitude associated to a spacetime region $M$. The classical dynamics in $M$ is encoded in a conjugate-linear involutive adapted real anti-isometry $u:L_{\partial M}\to L_{\partial M}$ on the boundary phase space $L_{\partial M}$. (This map is denoted $u_M$ in \cite{Oe:freefermi}.)
The amplitude map $\rho_M:\cF(L_{\partial M})\to\bC$ (see Axiom \textbf{(T4)} of the appendix) is then given as follows.
For a state of odd degree the amplitude map vanishes,
\begin{equation}
 \rho_M(\psi)= 0 \qquad\text{if}\quad \psi\in \cF_{2n+1}(L_{\partial M}) .
\end{equation}
Since we view the target space $\bC$ of $\rho_M$ as of even f-degree, this is equivalent to saying that $\rho_M$ is f-graded. Moreover, for the vacuum state $\psi_0\in \cF(L_{\partial M})$ the amplitude is the unit,
\begin{equation}
  \rho_M(\vac)=1 .
\end{equation}
For a state of fixed even degree $\psi\in\cF_{2n}(L_{\partial M})$ the amplitude is given by,
\begin{equation}
 \rho_M(\psi)= \frac{(2n)!}{n!}
 \sum_{j_1,\dots,j_n\in I} \{\zeta_{j_1},\zeta_{j_1},\} \cdots \{\zeta_{j_n},\zeta_{j_n}\}
 \psi(u \zeta_{j_1},\zeta_{j_1},\dots,u \zeta_{j_n},\zeta_{j_n}) .
 \label{eq:amplefock}
\end{equation}
(Compare formula (51) in \cite{Oe:locqft}. Again, a relative factor arises from the difference between using a real and a complex basis.)

In order to evaluate the amplitude $\rho_M$ on a coherent state $K(\Lambda,\xi)\in\cF(L_{\partial M})$ we begin by considering its value on the component $K_{2n}(\Lambda,\xi)\in\cF_{2n}(L_{\partial M})$ of fixed even degree. With the formulas (\ref{eq:coheven}) and (\ref{eq:amplefock}) we find that this can be expressed as follows,
\begin{multline}
  \rho_M(K_{2n}(\Lambda,\xi)) = \frac{1}{2^{2n}(n!)^2} R, \quad \text{with}\quad R \defeq \sum_{\sigma\in S^{2n}} R_{\sigma}, \quad \text{where}, \\
  R_\sigma \defeq  (-1)^{|\sigma|}
   \sum_{j_1,\dots,j_n\in I} \{\zeta_{j_1},\zeta_{j_1},\} \cdots \{\zeta_{j_n},\zeta_{j_n}\}
  \prod_{k=1}^n \{\Lambda\eta_{\sigma(2k)},\eta_{\sigma(2k-1)}\}_{\partial M},  \label{eq:amplcoheprob}
\end{multline}
and where we introduce the definitions, $\eta_{2k} \defeq\zeta_k$ and $\eta_{2k-1}\defeq u\zeta_k$. A closer look at expression $R_{\sigma}$ reveals that it factorizes as follows. We identify the variables $\eta_i$ with vertices $i\in\{1,\ldots,2n\}$ of a graph. For all $k$ connect with an edge the vertex $2k$ with the vertex $2k-1$, corresponding to the dependency on the same basis element $\zeta_k$. Also connect for any $k$ the vertex $\sigma(2k)$ with the vertex $\sigma(2k-1)$. This corresponds to common appearance in the inner product $\{\Lambda\eta_{\sigma(2k)},\eta_{\sigma(2k-1)}\}$. It is then easy to see that $R_{\sigma}$ factors into one component for each connected component of the resulting graph. What is more, using extensively the conjugate anti-symmetry property (\ref{eq:conjas}) both of $u$ and of $\Lambda$, it turns out that any factor involving $2k$ variables $\eta_i$ can be brought into the same form given by,
\begin{equation}
  f_k\defeq -\trk\left((u\Lambda)^k\right) .
  \label{eq:fk}
\end{equation}
What is more, any $R_{\sigma}$ is precisely a product $f_1^{j_1}\cdots f_n^{j_n}$ of these factors. In order to work out the precise expressions it is useful to analyze the underlying combinatorial problem more abstractly.

Consider the following combinatorial problem. Let $n\in\N$ and recall that $S^{2n}$ denotes the symmetric group in $2n$ elements. We associate to $\sigma\in S^{2n}$ a monomial $p_{\sigma}$ in variables $x_1,\ldots,x_n$ as follows. Consider a graph with $2n$ vertices labeled $1$ to $2n$. For each $k\in \{1,\ldots,n\}$ connect the vertices $2k-1$ and $2k$ with an edge and also connect the vertices $\sigma(2k-1)$ and $\sigma(2k)$ with an edge. The resulting graph has $2n$ vertices and $2n$ edges. (Note that we allow multiple edges between vertices.) This graph decomposes into a disjoint union of cyclic graphs with even edge number. Denote the multiplicity of the cyclic graph with $2k$ edges in this decomposition by $j_k$. Then,
\begin{equation}
  p_{\sigma}\defeq x_1^{j_1}\cdots x_n^{j_n} .
\end{equation}
Moreover, we define the polynomial $p_n$ as the sum over these monomials for all permutations,
\begin{equation}
  p_n\defeq \sum_{\sigma\in S^{2n}} p_{\sigma} .
  \label{eq:psumperm}
\end{equation}
We also define $p_0\defeq 1$.

\begin{lem}
  The polynomials $p_n$ satisfy the following recursion relations,
  \begin{equation}
    p_n=\frac{1}{2n}\sum_{k=1}^n 2^{2k} \left(\frac{n!}{(n-k)!}\right)^2 x_k\, p_{n-k}
    \quad\forall n\in\N .
    \label{eq:precrel}
  \end{equation}
\end{lem}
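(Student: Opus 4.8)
The plan is to reduce the sum over permutations to a sum over perfect matchings, and then to extract the recursion by peeling off the cyclic component containing a fixed vertex. First I would observe that the monomial $p_\sigma$ depends on $\sigma$ only through the perfect matching $M_2(\sigma)\defeq\{\{\sigma(2k-1),\sigma(2k)\} : 1\le k\le n\}$ of the vertex set $\{1,\dots,2n\}$, since the fixed matching $M_1\defeq\{\{2k-1,2k\}\}$ together with $M_2(\sigma)$ already determines the cyclic decomposition and hence $p_\sigma$. Each perfect matching $M$ is obtained as $M_2(\sigma)$ from exactly $2^n n!$ permutations $\sigma$: one chooses a bijection between the $n$ index blocks $\{2k-1,2k\}$ and the $n$ edges of $M$ (giving $n!$ choices) and an ordering of the two endpoints within each edge (giving $2^n$ choices). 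Writing $q_n\defeq\sum_M p_M$, where $p_M$ is the monomial attached to the cycle decomposition of $M_1\cup M$ and the sum runs over all perfect matchings $M$ of $\{1,\dots,2n\}$, I thus obtain $p_n=2^n n!\, q_n$, so it suffices to prove the corresponding recursion for $q_n$.

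Next I would peel off the cycle containing vertex $1$. In $M_1\cup M$ this vertex lies in a unique even cycle, say with $2k$ edges; such a cycle uses exactly $k$ of the blocks of $M_1$, one of which is $\{1,2\}$. To build all matchings $M$ for which this cycle has $2k$ edges I would (i) choose the remaining $k-1$ blocks lying on the cycle, in $\binom{n-1}{k-1}$ ways; (ii) choose the restriction of $M$ to these $2k$ vertices so that $M_1\cup M$ is a single cycle there, which contributes the factor $x_k$; and (iii) choose the restriction of $M$ to the remaining $2(n-k)$ vertices freely, which upon summation contributes $q_{n-k}$. This yields
\begin{equation*}
  q_n=\sum_{k=1}^n \binom{n-1}{k-1}\, c_k\, x_k\, q_{n-k},
\end{equation*}
where $c_k$ is the number of perfect matchings $M$ of a $2k$-element set carrying $k$ prescribed blocks such that $M_1\cup M$ is a single cycle.

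The crux, and the step I expect to be the main obstacle, is the evaluation $c_k=2^{k-1}(k-1)!$. I would prove this by contracting each of the $k$ blocks to a super-vertex: a matching $M$ makes $M_1\cup M$ a single $2k$-cycle precisely when the $M$-edges form a single cycle through the $k$ super-vertices, each super-vertex being entered and exited through its two distinct vertices. Counting the cyclic arrangements of the $k$ super-vertices together with the two endpoint choices at each block, and carefully accounting for the cyclic and reflection symmetries of the cycle (as well as the degenerate cases $k=1,2$ where multiple edges occur), gives $c_k=2^{k-1}(k-1)!$; the small cases $c_1=1$, $c_2=2$, $c_3=8$ can be checked by hand as a sanity test. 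The bookkeeping of these symmetries is the only genuinely delicate point.

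Finally I would substitute $c_k=2^{k-1}(k-1)!$ and $\binom{n-1}{k-1}(k-1)!=(n-1)!/(n-k)!$ into the recursion for $q_n$, and translate back using $p_m=2^m m!\, q_m$. Collecting the powers of two as $2^{n}\cdot 2^{k-1}/2^{n-k}=2^{2k-1}$ and the factorials as $n!\,(n-1)!/((n-k)!)^2=(n!)^2/\bigl(n\,((n-k)!)^2\bigr)$ reproduces exactly
\begin{equation*}
  p_n=\frac{1}{2n}\sum_{k=1}^n 2^{2k}\left(\frac{n!}{(n-k)!}\right)^2 x_k\, p_{n-k},
\end{equation*}
which is the claimed identity.
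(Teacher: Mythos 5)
Your proof is correct, and it takes a genuinely different route from the paper's, though both rest on the same core idea of isolating the cycle through vertex $1$. The paper never leaves the permutation sum: using the invariance of $p_\sigma$ under the action of $(S^2)^{2n}\times(S^n)^2$, it fixes $\sigma(1)=1$ (extracting a factor $2n$), splits on $\sigma(2)=2$ versus $\sigma(2)>2$, fixes $\sigma(2)=3$ and $\sigma(3)=2$ (extracting $(2(n-1))^2$), and so on, building the cycle through vertex $1$ one edge at a time; the recursion appears as a nested expression that is then unwound. You instead quotient by the stabilizer at the outset --- $p_\sigma$ depends only on the matching $M_2(\sigma)$, each matching having a fiber of exactly $2^n n!$ permutations --- and then decompose in a single step: choose the $k-1$ further blocks on the cycle through vertex $1$, multiply by the number $c_k$ of single-cycle matchings on those $2k$ vertices, and recurse on the complement. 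What your route buys is transparency: the opaque coefficient $2^{2k}\bigl(n!/(n-k)!\bigr)^2/(2n)$ is revealed as a fiber size times $\binom{n-1}{k-1}$ times $c_k$, suitably normalized, and the only nontrivial input is the classical pair-partition count $c_k=2^{k-1}(k-1)!$. What it costs is precisely that count, which you only sketch; it is correct and easily completed: there are $(k-1)!$ orderings of the remaining blocks times $2^k$ port choices, i.e.\ $2^k(k-1)!$ directed traversals, and each matching is traversed exactly twice when $k\ge 3$, the cases $k=1,2$ being your hand-checked degenerate ones ($c_1=1$, $c_2=2$, $c_3=8$ are all right). The paper's edge-by-edge extraction avoids any cyclic or reflection symmetry bookkeeping, since fixing images one at a time produces the multiplicities directly, but it leaves the cycle count hidden inside the stepwise factors. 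One cosmetic warning: your $q_n\defeq p_n/(2^n n!)$ collides with the paper's later definition $q_n\defeq p_n/(2^{2n}(n!)^2)$; rename one of them if your argument is to sit alongside the text.
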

\begin{proof}
  Fix $n\in\N$ and $\sigma\in S^{2n}$. As a first observation, we note that $p_\sigma$ remains unchanged if for any $k$ we interchange the label of the vertex $2k-1$ with that of the vertex $2k$. This generates an invariance under the action of a $k$-fold product of $S^2$. Similarly, $p_\sigma$ remains unchanged if we interchange for $k\neq m$ the label of the vertex $2k-1$ with that of the vertex $2m-1$ and at the same time that of $2k$ with that of $2m$. This generates an invariance under an action of $S^n$. We obtain another instance of the same type of invariances by replacing everywhere the label $k$ with the label $\sigma(k)$. That is, in total we obtain an invariance under an action of $(S^2)^{2n}\times (S^n)^2$.
  
  We now consider the sum (\ref{eq:psumperm}). Due to the mentioned invariances we can restrict the permutations to satisfy $\sigma(1)=1$ without altering the sum, extracting a factor $2n$ for the corresponding multiplicity,
\begin{equation}
  p_n= 2n \sum_{\substack{\sigma\in S^{2n}\\ \sigma(1)=1}} p_\sigma .
\end{equation}
We now split the remaining sum into two parts, the first with the terms where $\sigma(2)=2$ and the second for $\sigma(2)>2$. (Note that $\sigma(2)=1$ cannot occur since $\sigma(1)=1$.) In terms of graphs, the first term yields a digon graph with the vertices labeled $1$ and $2$ and a sum over graphs for the remaining vertices indexed by $S^{2(n-1)}$ acting on these remaining vertices. That is, we get a factor $x_1$ for the digon graph and the polynomial $p_{n-1}$ for the remainder,
\begin{equation}
  p_n= 2n \left( x_1\, p_{n-1} + \sum_{\substack{\sigma\in S^{2n}\\ \sigma(1)=1,\sigma(2)>2}} p_\sigma\right) .
\end{equation}
In the second term we can fix $\sigma(2)=3$, again due to the mentioned symmetries, extracting a multiplicity of $2(n-1)$. On top of that we can fix $\sigma(3)=2$, yielding another factor $2(n-1)$,
\begin{equation}
  p_n= 2n \left( x_1\, p_{n-1} + (2(n-1))^2 \sum_{\substack{\sigma\in S^{2n}\\ \sigma(1)=1,\sigma(2)=3,\sigma(3)=2}} p_\sigma \right) .
\end{equation}
Next, we split the sum into a part with $\sigma(4)=4$ and another part with $\sigma(4)>4$. In the first
part in terms of graphs we get a 4-gon (square graph) made out of the first four vertices, yielding $x_2$, and a remaining sum indexed by $S^{2(n-2)}$, yielding $p_{n-2}$,
\begin{equation}
  p_n= 2n \left( x_1\, p_{n-1} + (2(n-1))^2 \left( x_2\, p_{n-2} + \sum_{\substack{\sigma\in S^{2n}\\ \sigma(1)=1,\sigma(2)=3,\sigma(3)=2,\sigma(4)>4}} p_\sigma \right)\right) .
\end{equation}
Iterating this process we arrive at the recursion relation,
\begin{equation}
  p_n= 2n \left( x_1\, p_{n-1} + (2(n-1))^2 \left( x_2\, p_{n-2} + \ldots (2\cdot 2)^2\left(x_{n-1} p_1 + (2\cdot 1)^2\, x_n p_0\right)\ldots\right)\right) .
\end{equation}
This can be conveniently rewritten as (\ref{eq:precrel}). \myqed
\end{proof}

We introduce the polynomials $q_n$ as rescaling of the $p_n$,\footnote{Note that $2^{2n} (n!)^2$ is precisely the order of the symmetry group $(S^2)^{2n}\times (S^n)^2$ mentioned in the proof.}
\begin{equation}
  q_n\defeq \frac{p_n}{2^{2n} (n!)^2} .
\end{equation}
Also, we introduce rescalings $y_k$ of the variables $x_k$,
\begin{equation}
  y_k\defeq\frac{1}{2} x_k .
\end{equation}
The recursion relation (\ref{eq:precrel}) then simplifies considerably,
\begin{equation}
  q_n=\frac{1}{n}\sum_{k=1}^n y_k\, q_{n-k}\quad\forall n\in\N .
\end{equation}
It turns out that precisely this recurrence relation characterizes the \emph{cycle index} of the symmetric group in $n$ elements, a concept introduced by Pólya for combinatorial problems similar to the type we are considering \cite{Pol:Kombanz}. As noted by Pólya this cycle index admits the following explicit expression,
\begin{equation}
  q_n=\sum_{j_1+2 j_2+\ldots+n j_n=n} \prod_{k=1}^n \frac{1}{j_k!}  \left(\frac{y_k}{k}\right)^{j_k} .
\end{equation}

This solves precisely our problem concerning the amplitude of $K_{2n}(\Lambda,\xi)$ if we evaluate at $x_k=f_k$. Then, $p_n$ becomes $R$ in expression (\ref{eq:amplcoheprob}), while $q_n$ becomes the amplitude itself.

\begin{lem}
  Let $M$ be a region, $\Lambda\in\asl(L_{\partial M})$ and $\xi\in L_{\partial M}$. Assume moreover that $u\Lambda$ is trace class. Then,
  \begin{equation}
    \rho_M(K_{2n}(\Lambda,\xi))=\sum_{j_1+2 j_2+\ldots+n j_n=n} \frac{1}{j_k!}  \left(\frac{f_k}{2 k}\right)^{j_k},\quad\text{where $f_k$ is defined in (\ref{eq:fk}}).
  \end{equation}
\end{lem}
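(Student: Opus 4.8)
The plan is to reduce the statement to the combinatorial identity already established for $p_n$, the only genuinely new work being the verification that the analytic quantity $R$ in (\ref{eq:amplcoheprob}) is exactly $p_n$ evaluated at $x_k=f_k$. I would begin from (\ref{eq:amplcoheprob}), which already isolates $R=\sum_{\sigma\in S^{2n}}R_\sigma$ together with the prefactor $1/(2^{2n}(n!)^2)$, and fix a single permutation $\sigma$. To $\sigma$ I attach the graph of the combinatorial problem: the $\zeta$-edges joining $2m-1$ to $2m$ (recording that $\eta_{2m-1}=u\zeta_{j_m}$ and $\eta_{2m}=\zeta_{j_m}$ share the basis index $j_m$) and the $\sigma$-edges joining $\sigma(2k-1)$ to $\sigma(2k)$ (recording the inner-product factor $\{\Lambda\eta_{\sigma(2k)},\eta_{\sigma(2k-1)}\}_{\partial M}$). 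Since every vertex meets exactly one $\zeta$-edge and one $\sigma$-edge, the graph is a disjoint union of cycles of even length, and the basis sum factorizes over connected components; it remains to evaluate one component.

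The core step is to show that a cycle of length $2k$ contributes exactly $f_k=-\trk((u\Lambda)^k)$. I would trace the cycle, alternately crossing $\zeta$-edges and $\sigma$-edges; a $\zeta$-edge relates the two vectors at its ends by $u$ (since $\eta_{2m-1}=u\eta_{2m}$ and $u^2=\one$), while each $\sigma$-edge contributes one $\Lambda$ through its inner product. Using conjugate anti-symmetry (\ref{eq:conjas}) of both $u$ and $\Lambda$, together with the conjugate symmetry $\{v,w\}=\overline{\{w,v\}}$, I would move every $u$ off the basis vectors so that each internal index $j_m$ appears once as $\{\cdots,\zeta_{j_m}\}$ and once as $\{\zeta_{j_m},\cdots\}$; the completeness relation (\ref{eq:cness}) then contracts that index. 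Iterating once around the cycle threads the alternating product $u\Lambda u\Lambda\cdots$ through these contractions, and closing the cycle yields $\pm\trk((u\Lambda)^k)$. The hypothesis that $u\Lambda$ be trace class guarantees that $(u\Lambda)^k$ is trace class and that the repeated reordering of the (a priori only conditionally convergent) multiple sums is legitimate.

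The delicate point — and what I expect to be the main obstacle — is the sign. Each use of conjugate anti-symmetry produces a factor $-1$, and the bare contractions of a closing $2k$-cycle generate $\pm\trk((u\Lambda)^k)$ with a sign depending on the orientation of the $\sigma$-edges; this must be reconciled with the global factor $(-1)^{|\sigma|}$ in $R_\sigma$. I would control it by writing $\sigma$ through the two fixed-point-free involutions $\alpha$ (the $\zeta$-matching) and $\beta=\sigma\alpha\sigma^{-1}$ (the $\sigma$-matching), so that the cycles of the graph are the cycles of $\alpha\beta$ and the parity $(-1)^{|\sigma|}$ is pinned down by their number and lengths. The claim to be verified is the parity identity that the per-cycle contraction signs, multiplied by $(-1)^{|\sigma|}$, collapse to precisely one factor of $-1$ per cycle, i.e.\ to $\prod_c f_{k_c}$; carrying out this count carefully is the heart of the argument.

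Granting the factorization $R_\sigma=\prod_c f_{k_c}$, the cycle type of $\sigma$ is read off as the monomial $p_\sigma$ evaluated at $x_k=f_k$, so by (\ref{eq:psumperm}) we obtain $R=p_n|_{x_k=f_k}$ and hence $\rho_M(K_{2n}(\Lambda,\xi))=p_n|_{x_k=f_k}/(2^{2n}(n!)^2)=q_n|_{x_k=f_k}$. Inserting the rescaling $y_k=x_k/2$, so that $y_k/k=f_k/(2k)$, into Pólya's explicit cycle-index expression for $q_n$ yields the asserted formula, completing the proof.
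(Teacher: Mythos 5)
Your proposal follows essentially the same route as the paper: the graph factorization of $R_\sigma$ into even cycles, the per-cycle evaluation to $f_k=-\trk\left((u\Lambda)^k\right)$, the identification $R=p_n|_{x_k=f_k}$ hence $\rho_M(K_{2n}(\Lambda,\xi))=q_n|_{x_k=f_k}$, and P\'olya's explicit cycle-index formula. You are in fact more explicit than the paper about the delicate sign bookkeeping (which the paper dispatches with ``it turns out that''), and your organization of that count via the two fixed-point-free involutions is a sensible way to make it rigorous.
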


Our next task will be to obtain the complete amplitude of the coherent state by summing over all even Fock degrees,
\begin{equation}
  \rho_M(K(\Lambda,\xi))=\sum_{n=0}^\infty \rho_M(K_{2n}(\Lambda,\xi)).
\end{equation}
Again, it will be fruitful to consider the problem first in a more abstract setting.

By a \emph{formal power series} in variables $y_1,y_2,\ldots$ we mean an assignment of a real coefficient to each finite monomial that can be formed with the variables $y_1,y_2,\ldots$. We denote the ring of these formal power series by $\R[[y_1,y_2,\ldots]]$.

\begin{lem}
  In $\R[[y_1,y_2,\ldots]]$ we have the equality,
  \begin{equation}
    \sum_{n=0}^\infty q_n=\exp\left(\sum_{k=1}^\infty \frac{y_k}{k}\right) .
  \end{equation}
\end{lem}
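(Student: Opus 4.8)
The plan is to reduce the identity to the closed form for the cycle index stated just above the lemma and then to recognize the right-hand side as the expansion of a product of exponentials. I would start from
\begin{equation*}
  q_n=\sum_{j_1+2j_2+\cdots+nj_n=n}\prod_{k=1}^n\frac{1}{j_k!}\left(\frac{y_k}{k}\right)^{j_k}
\end{equation*}
and sum over all $n\geq 0$. The key observation is that, as $n$ ranges over the non-negative integers and $(j_1,\ldots,j_n)$ ranges over the tuples with $\sum_k k\,j_k=n$, the pair $(n,(j_k))$ runs exactly once through the set of all \emph{finitely supported} sequences $(j_1,j_2,\ldots)$ of non-negative integers: the constraint $\sum_k k\,j_k=n$ merely records the weight $n$ that such a sequence already determines. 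Hence the double sum collapses into a single, unconstrained sum over all such sequences.

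Next I would factor this unconstrained sum over the independent choices of each $j_k$, turning it into an infinite product,
\begin{equation*}
  \sum_{n=0}^\infty q_n=\sum_{(j_1,j_2,\ldots)}\prod_{k=1}^\infty\frac{1}{j_k!}\left(\frac{y_k}{k}\right)^{j_k}=\prod_{k=1}^\infty\left(\sum_{j=0}^\infty\frac{1}{j!}\left(\frac{y_k}{k}\right)^{j}\right)=\prod_{k=1}^\infty\exp\left(\frac{y_k}{k}\right).
\end{equation*}
Since each $y_k/k$ has vanishing constant term, the standard identity $\exp(a)\exp(b)=\exp(a+b)$ applies factor by factor and yields $\prod_{k=1}^\infty\exp(y_k/k)=\exp\bigl(\sum_{k=1}^\infty y_k/k\bigr)$, which is the assertion.

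The only real work is the bookkeeping in $\R[[y_1,y_2,\ldots]]$, and that is where I would spend the care: I must check that each rearrangement is an identity of formal power series rather than a purely formal manipulation. The decisive point, used at every step, is that any fixed monomial $\prod_k y_k^{j_k}$ (necessarily finitely supported) receives contributions from only finitely many summands — from a single term of $q_n$ with $n=\sum_k k\,j_k$, and from a single term of the infinite product, namely the one taking $y_k^{j_k}/(j_k!\,k^{j_k})$ from the $k$-th factor and the constant $1$ from all others. This finiteness makes the reindexing, the infinite product, and the infinite exponential well defined and equal coefficient by coefficient, so no analytic convergence is involved; this is the step I expect to require the most attention, even though it is not deep.

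As an alternative that avoids invoking Pólya's closed form, I note that the recursion $n\,q_n=\sum_{k=1}^n y_k\,q_{n-k}$ with $q_0=1$ can be integrated directly: introducing a grading variable $t$ and $\Phi(t)\defeq\sum_{n\geq0} q_n t^n$, the recursion becomes the linear equation $\Phi'(t)=\bigl(\sum_{k\geq1}y_k t^{k-1}\bigr)\Phi(t)$ with $\Phi(0)=1$, whose solution is $\Phi(t)=\exp\bigl(\sum_{k\geq1}y_k t^k/k\bigr)$. Since each $q_n$ is weight-homogeneous of degree $n$ when $y_k$ is assigned weight $k$, specializing $t=1$ is legitimate and recovers $\sum_{n\geq0} q_n$, giving the same result.
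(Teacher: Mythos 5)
Your main argument is correct and is essentially the paper's own proof: both start from Pólya's closed form for $q_n$, collapse the constrained double sum into a single unconstrained sum over finitely supported sequences $(j_1,j_2,\ldots)$, factor that sum as an infinite product of exponentials, and justify each rearrangement coefficient-by-coefficient in $\R[[y_1,y_2,\ldots]]$. Your alternative derivation via the generating function $\Phi(t)$ and its linear differential equation is a valid independent route that bypasses the closed form, but the paper does not need or use it.
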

\begin{proof}
  \begin{align}
  \sum_{n=0}^\infty q_n
  & =\sum_{n=0}^\infty\, \sum_{j_1+2 j_2+\ldots+n j_n=n}\, \prod_{k=1}^n \frac{1}{j_k!}\left(\frac{y_k}{k}\right)^{j_k} \\
  & =\sum_{j_1,j_2,\ldots=0}^\infty\, \prod_{k=1}^\infty \frac{1}{j_k!}\left(\frac{y_k}{k}\right)^{j_k}
  \label{eq:sumq2} \\
  & =\prod_{k=1}^\infty \sum_{j=0}^\infty \frac{1}{j!}\left(\frac{y_k}{k}\right)^{j} \\
  & =\prod_{k=1}^\infty \exp\left(\frac{y_k}{k}\right) \\
  & =\exp\left(\sum_{k=1}^\infty \frac{y_k}{k}\right).
  \end{align}
  Note that working in formal power series means that all expressions are interpreted as (infinite) sums of finite monomials with coefficients. In particular, the sum in expression (\ref{eq:sumq2}) is over assignments of non-negative integers to indices $j_1,j_2,\ldots$ in such a way that only finitely many indices are non-zero. This same expression also shows that the coefficient of each monomial is well defined. \myqed
\end{proof}

\begin{thm}
  \label{thm:amplcoh}
  Let $M$ be a region, $\Lambda\in\asl(L_{\partial M})$ and $\xi\in L_{\partial M}$. Assume moreover that $u\Lambda$ is trace class and $\lop u \Lambda\rop<1$. Then,
  \begin{equation}
    \rho_M(K(\Lambda,\xi)) = \left(\det\left(\one-u\Lambda\right)\right)^\frac{1}{2} .
  \end{equation}
\end{thm}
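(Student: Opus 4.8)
The plan is to combine the two preceding lemmas with an identification of the abstract generating-function computation with the desired Fredholm determinant. The key insight is that the final lemma expresses $\rho_M(K_{2n}(\Lambda,\xi))$ as exactly the cycle index $q_n$ evaluated at $y_k = f_k/(2k) \cdot k = f_k/2$, or more precisely that summing over $n$ reduces to the formal power series identity just established. So first I would invoke the previous lemma to write the total amplitude as $\rho_M(K(\Lambda,\xi)) = \sum_{n=0}^\infty q_n$ evaluated at the specific values $y_k = f_k/2 = -\frac{1}{2}\trk((u\Lambda)^k)$.

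Next I would apply the formal power series lemma, which gives $\sum_{n=0}^\infty q_n = \exp\left(\sum_{k=1}^\infty \frac{y_k}{k}\right)$, so upon substitution the amplitude becomes
\begin{equation}
  \rho_M(K(\Lambda,\xi)) = \exp\left(-\frac{1}{2}\sum_{k=1}^\infty \frac{1}{k}\trk\left((u\Lambda)^k\right)\right).
\end{equation}
The remaining task is to recognize the exponent as $\frac{1}{2}\log\det(\one - u\Lambda)$. This is the classical identity $\log\det(\one - A) = \trk(\log(\one - A)) = -\sum_{k=1}^\infty \frac{1}{k}\trk(A^k)$ with $A = u\Lambda$, valid precisely when $\lop u\Lambda\rop<1$ so that the series converges, and when $u\Lambda$ is trace class so that the Fredholm determinant and the traces $\trk((u\Lambda)^k)$ are all well-defined. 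Both hypotheses are exactly the assumptions of the theorem, which is why they appear.

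The main obstacle, and the step requiring genuine care, is the passage from the formal power series identity to a convergent analytic statement. The lemma on $\sum q_n$ lives in $\R[[y_1,y_2,\ldots]]$ and is a purely algebraic equality of coefficients; I would need to justify that substituting the specific numerical values $y_k = -\frac{1}{2}\trk((u\Lambda)^k)$ yields genuinely convergent sums on both sides and that the substitution commutes with the summation. For this I would use $\lop u\Lambda\rop<1$ together with trace-class estimates of the form $|\trk((u\Lambda)^k)| \le \lop u\Lambda\rop^{k-1}\,\trk|u\Lambda|$ (or an analogous bound appropriate to the Krein setting, using the Hilbert structure induced by a decomposition) to dominate the series $\sum_k \frac{1}{k}|\trk((u\Lambda)^k)|$ by a convergent geometric-type series, and similarly to control the rearrangement in the formal-power-series computation. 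Once absolute convergence is secured, the formal identity transfers to the numerical one and the standard trace-logarithm expansion of the Fredholm determinant \cite{Sim:traceideals} closes the argument; the correct branch of the square root is fixed by continuity from $u\Lambda=0$, where the determinant equals one.
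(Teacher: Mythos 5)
Your proposal is correct and follows essentially the same route as the paper: combine the lemma expressing $\rho_M(K_{2n}(\Lambda,\xi))$ via the cycle index $q_n$ at $y_k=-\tfrac{1}{2}\trk((u\Lambda)^k)$ with the formal power series identity $\sum_n q_n=\exp(\sum_k y_k/k)$, then invoke the trace--logarithm expansion of the Fredholm determinant (in its Krein-space version) under the hypotheses that $u\Lambda$ is trace class and $\lop u\Lambda\rop<1$. Your explicit attention to why the formal identity transfers to a convergent numerical one is a point the paper only disposes of in a one-line remark after the proof, but it is the same argument.
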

\begin{proof}
  With the previous lemma we get,
\begin{align}
  \rho_M(K(\Lambda,\xi)) & =\sum_{n=0}^\infty \rho_M(K_{2n}(\Lambda,\xi)) \\
  & =\exp\left(\sum_{k=1}^\infty -\frac{1}{2k}\trk\left((u\Lambda)^k\right)\right) \label{eq:amplcohtr}\\
  & =\exp\left(\frac{1}{2}\trk\left(\sum_{k=1}^\infty -\frac{1}{k}(u\Lambda)^k\right)\right) \\
  & =\exp\left(\frac{1}{2}\trk\left(\ln\left(\one-u\Lambda\right)\right)\right) \\
  & =\left(\exp\left(\trk\left(\ln\left(\one-u\Lambda\right)\right)\right)\right)^{\frac{1}{2}} \\
  & =\left(\det\left(\one-u\Lambda\right)\right)^\frac{1}{2} .
\end{align}
Here we use again the Fredholm determinant for Krein spaces. \myqed
\end{proof}
Note that the condition on the operator norm of $u\Lambda$ serves to guarantee convergence of the sums in the proof. However, it is likely that this condition can be relaxed by analogy to Proposition~7.3 of \cite{Oe:fermicoh}. Recall corresponding comments after Theorem~\ref{thm:cohip} in Section~\ref{sec:cohstates}.

Finally, we recall that the inner product on the state spaces and the amplitude are intimately related due to Axiom \textbf{(T3x)} of the appendix. In particular, this means that we can recover the inner product on the state space $\cH_{\Sigma}$ for the hypersurface $\Sigma$ as a special case of the amplitude. To this end we need to evaluate the latter on the \emph{slice region} $\hat{\Sigma}$ obtained by infinitesimally thickening $\Sigma$.
Note here that $\partial\hat{\Sigma}=\overline{\Sigma}\cup\Sigma$. Correspondingly, we have $L_{\partial \hat{\Sigma}}=L_{\overline{\Sigma}}\oplus L_{\Sigma}$. For the details, see \cite{Oe:freefermi}.
Concretely, the inner product on $\Sigma$ is given in terms of the amplitude on $\hat{\Sigma}$ as,
\begin{equation}
\langle \psi',\psi\rangle_{\Sigma}=\rho_{\hat{\Sigma}}\circ\tau_{\overline{\Sigma},\Sigma;\partial \hat{\Sigma}}(\iota_{\Sigma}(\psi')\tprod\psi) .
\end{equation}
For coherent states we get, using (\ref{eq:iotacoh}) and (\ref{eq:taucoh}),
\begin{multline}
  S\defeq \langle K(\Lambda,\xi),K(\Lambda',\xi')\rangle_{\Sigma}=\rho_{\hat{\Sigma}}\left(K(\Lambda,-\xi)\tprod K(\Lambda',\xi')\right)\\
  = \rho_{\hat{\Sigma}}(K(\Lambda_{\overline{\Sigma}}+\Lambda'_{\Sigma}+\tilde{\Lambda},-\xi_{\overline{\Sigma}}+\xi'_{\Sigma})) .
  \label{eq:ipamplcoh}
\end{multline}
Here $\tilde{\Lambda}(\eta_{\overline{\Sigma}}+\eta_{\Sigma}')=\frac{1}{2}\left(\xi'_{\Sigma}\{\eta,\xi\}_{\overline{\Sigma}}-\xi_{\overline{\Sigma}}\{\eta,\xi'\}_{\Sigma}\right)$. We use the subscripts $\Sigma$ and $\Sigma'$ to indicate in which component of $L_{\partial \hat{\Sigma}}=L_{\overline{\Sigma}}\oplus L_{\Sigma}$ a certain object lives. With Theorem~\ref{thm:amplcoh} we can now evaluate in principle the right hand side of expression (\ref{eq:ipamplcoh}). Note that for a slice region, $u$ (denoted $u_{\hat{\Sigma}}$ in \cite{Oe:freefermi}) takes the particularly simple form,
\begin{equation}
  u(\eta_{\overline{\Sigma}}+\eta_{\Sigma}')=\eta'_{\overline{\Sigma}}+\eta_{\Sigma} .
\end{equation}
That is, $u$ simply interchanges the components of $L_{\partial \hat{\Sigma}}=L_{\overline{\Sigma}}\oplus L_{\Sigma}$. We obtain,
\begin{equation}
  \rho_{\hat{\Sigma}}(K(\Lambda_{\overline{\Sigma}}+\Lambda'_{\Sigma}+\tilde{\Lambda},-\xi_{\overline{\Sigma}}+\xi'_{\Sigma}))={\det}_{\partial\hat{\Sigma}}\left(\one-u(\Lambda_{\overline{\Sigma}}+\Lambda'_{\Sigma}+\tilde{\Lambda})\right)^{\frac{1}{2}} .
\end{equation}
The difficulty lies in obtaining a more explicit expression for the right hand side. In fact, it can be shown that this evaluates precisely to the expression given in Theorem~\ref{thm:cohip}.

\begin{proof}[Proof of Theorem~\ref{thm:cohip}]
  Instead of the determinant formula we start with formula (\ref{eq:amplcohtr}). We write this as,
  \begin{equation}
    S=\exp\left(\sum_{k=1}^\infty -\frac{1}{2k}\trk_{\partial\hat{\Sigma}}\left((u\underline{\Lambda}+u\tilde{\Lambda})^k\right)\right), \qquad\text{where}\qquad \underline{\Lambda}\defeq\Lambda_{\overline{\Sigma}}+\Lambda'_{\Sigma} .
    \label{eq:iptr1}
  \end{equation}
  The first step will be to expand the innermost sum and reorganize terms in terms of powers of $u\tilde{\Lambda}$. A useful observation is that terms of the form $\trk_{\partial\hat{\Sigma}}\left((u\underline{\Lambda})^k\right)$ vanish if $k$ is odd, since $u\underline{\Lambda}$ exchanges the components $L_{\overline{\Sigma}}$ and $L_{\Sigma}$. But an odd number of exchanges will lead to a vanishing trace. Another observation is that any product of the form $u\tilde{\Lambda}(u\Lambda)^k u\tilde{\Lambda}$ vanishes if $k$ is odd. This is due to conjugate-linearity of $\Lambda$ and $\Lambda'$ and the symmetric appearance of the vector $\xi$ or $\xi'$ respectively in those expressions, as can be seen by inspection. Thus a term that contains $n$ powers of $u\tilde{\Lambda}$ can be brought into the form
  \begin{equation}
    \trk_{\partial\hat{\Sigma}}\left(u\tilde{\Lambda}(u\underline{\Lambda})^{2k_1} u\tilde{\Lambda}(u\underline{\Lambda})^{2k_2}\cdots u\tilde{\Lambda}(u\underline{\Lambda})^{2k_n}\right) ,
  \end{equation}
  by using cyclic permutation symmetry of the trace. For fixed $n$ the relative multiplicity of the term with powers $k1_,\ldots,k_n$ if we sum independently over $k_1,\ldots,k_n$ is given by
  \begin{equation}
  \frac{2k_1+\cdots+2k_n+n}{n} .
  \end{equation}
  Combining with the weight factors in (\ref{eq:iptr1}) we can restructure the expression as,
  \begin{multline}
    S=\exp\left(\sum_{k=1}^\infty -\frac{1}{4k}\trk_{\partial\hat{\Sigma}}
    \left((u\underline{\Lambda})^{2k}\right)\right) \\
    \exp\left(\sum_{n=1}^\infty -\frac{1}{2n}\sum_{k_1,\ldots,k_n =0}^\infty\trk_{\partial\hat{\Sigma}}
    \left(u\tilde{\Lambda}(u\underline{\Lambda})^{2k_1}\cdots u\tilde{\Lambda}(u\underline{\Lambda})^{2k_n}\right)\right) .
  \label{eq:iptr2}
  \end{multline}
  As for the first factor we observe that the trace decomposes into a sum of traces for each of the components $L_{\overline{\Sigma}}$ and $L_{\Sigma}$ of $L_{\partial\hat{\Sigma}}$ with equal value. More precisely, we find,
  \begin{equation}
    \trk_{\partial\hat{\Sigma}} \left((u\underline{\Lambda})^{2k}\right)
    =2\, \trk_{\Sigma}\left((\Lambda \Lambda')^k\right) .
  \end{equation}
  As for the second factor we observe the same decomposition of the trace into a sum of two equal traces. What is more, these component traces factorize into $n$ factors as follows,
  \begin{equation}
    \trk_{\partial\hat{\Sigma}} \left(u\tilde{\Lambda}(u\underline{\Lambda})^{2k_1}
    \cdots u\tilde{\Lambda}(u\underline{\Lambda})^{2k_n}\right)
    =2 g_{k_1} \cdots g_{k_n},\;\text{with}\;
    g_k = - \frac{1}{2}\{\xi',(\Lambda \Lambda')^k \xi\}_{\Sigma} .
  \end{equation}
  This yields,
  \begin{align}
    S & =\exp\left(\sum_{k=1}^\infty -\frac{1}{2k}
    \trk_{\Sigma}\left((\Lambda \Lambda')^k\right)\right)
    \exp\left(\sum_{n=1}^\infty -\frac{1}{n}\sum_{k_1,\ldots,k_n =0}^\infty
    g_{k_1} \cdots g_{k_n}\right) \\
     & =\exp\left(\frac{1}{2}\trk_{\Sigma}\left(\sum_{k=1}^\infty -\frac{1}{k}
    (\Lambda \Lambda')^k\right)\right)
    \exp\left(\sum_{n=1}^\infty -\frac{1}{n}\left(\sum_{k=0}^\infty
    g_{k}\right)^n\right) .
  \end{align}
  For the sum over the factors $g_k$ we get,
  \begin{multline}
    \sum_{k=0}^\infty g_{k}= \sum_{k=0}^\infty - \frac{1}{2}\{\xi',(\Lambda \Lambda')^k \xi\}_{\Sigma} = - \frac{1}{2}\left\{\xi',\left(\sum_{k=0}^\infty (\Lambda \Lambda')^k\right) \xi\right\}_{\Sigma} \\
    = - \frac{1}{2}\left\{\xi',\left(\one-\Lambda \Lambda'\right)^{-1} \xi\right\}_{\Sigma}
    = -\frac{1}{2} b ,
  \end{multline}
  where $b$ is defined as in equation (\ref{eq:ipb}).
  (Note that convergence here requires $\lop\Lambda\Lambda'\rop<1$.) Further we get,
  \begin{align}
    S & =\exp\left(\frac{1}{2}
    \trk_{\Sigma}\left(\ln\left(\one-\Lambda \Lambda'\right)\right)\right)
    \exp\left(\ln \left(1+\frac{1}{2}b\right)\right) \\
    & =\left(\exp\left(\trk_{\Sigma}\left(\ln\left(\one-\Lambda \Lambda'\right)\right)\right)
    \right)^{\frac{1}{2}}
    \left(1+\frac{1}{2}b\right) \\
    & =\left(\det\left(\one-\Lambda \Lambda'\right)\right)^{\frac{1}{2}}
    \left(1+\frac{1}{2}b\right) .
  \end{align}
  This reproduces expression (\ref{eq:kip}), concluding the proof. \myqed
\end{proof}

\section*{Appendix: GBQFT axioms}

The following is a version of the axiomatic system of general boundary quantum field theory from \cite{Oe:freefermi}, slightly modified.

\begin{description}
\item[\textbf{(T1)}] Associated to each hypersurface $\Sigma$ is a complex
  separable f-graded Krein space $\cH_\Sigma$, called the \emph{state space} of
  $\Sigma$. We denote its indefinite inner product by
  $\langle\cdot,\cdot\rangle_\Sigma$.
\item[\textbf{(T1b)}] Associated to each hypersurface $\Sigma$ is a conjugate linear adapted f-graded isometry $\iota_\Sigma:\cH_\Sigma\to\cH_{\overline{\Sigma}}$. This map is an involution in the sense that $\iota_{\overline{\Sigma}}\circ\iota_\Sigma$ is the identity on  $\cH_\Sigma$.
\item[\textbf{(T2)}] Suppose the hypersurface $\Sigma$ decomposes into a union of hypersurfaces $\Sigma=\Sigma_1\cup\cdots\cup\Sigma_n$. Then,
  there is an isometric isomorphism of Krein spaces
  $\tau_{\Sigma_1,\dots,\Sigma_n;\Sigma}:\cH_{\Sigma_1}\ctprod\cdots\ctprod\cH_{\Sigma_n}\to\cH_\Sigma$. The maps $\tau$ satisfy obvious associativity conditions. Moreover, in the case $n=2$ the map $\tau_{\Sigma_2,\Sigma_1;\Sigma}^{-1}\circ \tau_{\Sigma_1,\Sigma_2;\Sigma}:\cH_{\Sigma_1}\ctprod\cH_{\Sigma_2}\to\cH_{\Sigma_2}\ctprod\cH_{\Sigma_1}$ is the f-graded transposition,
\begin{equation}
 \psi_1\tprod\psi_2\mapsto (-1)^{\fdg{\psi_1}\cdot\fdg{\psi_2}}\psi_2\tprod\psi_1 .
\end{equation}
\item[\textbf{(T2b)}] Orientation change and decomposition are compatible in an f-graded sense. That is, for a decomposition of hypersurfaces $\Sigma=\Sigma_1\cup\Sigma_2$ we have
\begin{equation}
\tau_{\overline{\Sigma}_1,\overline{\Sigma}_2;\overline{\Sigma}}
 \left((\iota_{\Sigma_1}\tprod\iota_{\Sigma_2})(\psi_1\tprod\psi_2)\right)
  =(-1)^{\fdg{\psi_1}\cdot\fdg{\psi_2}}\iota_\Sigma\left(\tau_{\Sigma_1,\Sigma_2;\Sigma}(\psi_1\tprod\psi_2)\right) .
\end{equation}
\item[\textbf{(T4)}] Associated with each region $M$ is an f-graded linear map
  from a dense subspace $\cH_{\partial M}^\ds$ of the state space
  $\cH_{\partial M}$ of its boundary $\partial M$ (which carries the
  induced orientation) to the complex
  numbers, $\rho_M:\cH_{\partial M}^\ds\to\bC$. This is called the
  \emph{amplitude} map.
\item[\textbf{(T3x)}] Let $\Sigma$ be a hypersurface. The boundary $\partial\hat{\Sigma}$ of the associated slice region $\hat{\Sigma}$ decomposes into the disjoint union $\partial\hat{\Sigma}=\overline{\Sigma}\sqcup\Sigma'$, where $\Sigma'$ denotes a second copy of $\Sigma$. Then, $\rho_{\hat{\Sigma}}$ is well defined on $\tau_{\overline{\Sigma},\Sigma';\partial\hat{\Sigma}}(\cH_{\overline{\Sigma}}\tprod\cH_{\Sigma'})\subseteq\cH_{\partial\hat{\Sigma}}$. Moreover, $\rho_{\hat{\Sigma}}\circ\tau_{\overline{\Sigma},\Sigma';\partial\hat{\Sigma}}$ restricts to a bilinear pairing $(\cdot,\cdot)_\Sigma:\cH_{\overline{\Sigma}}\times\cH_{\Sigma'}\to\bC$ such that $\langle\cdot,\cdot\rangle_\Sigma=(\iota_\Sigma(\cdot),\cdot)_\Sigma$.
\item[\textbf{(T5a)}] Let $M_1$ and $M_2$ be regions and $M\defeq M_1\sqcup M_2$ be their disjoint union. Then $\partial M=\partial M_1\sqcup \partial M_2$ is also a disjoint union and $\tau_{\partial M_1,\partial M_2;\partial M}(\cH_{\partial M_1}^\ds\tprod \cH_{\partial M_2}^\ds)\subseteq \cH_{\partial M}^\ds$. Moreover, for all $\psi_1\in\cH_{\partial M_1}^\ds$ and $\psi_2\in\cH_{\partial M_2}^\ds$,
\begin{equation}
 \rho_{M}\left(\tau_{\partial M_1,\partial M_2;\partial M}(\psi_1\tprod\psi_2)\right)= \rho_{M_1}(\psi_1)\rho_{M_2}(\psi_2) .
\end{equation}
\item[\textbf{(T5b)}] Let $M$ be a region with its boundary decomposing as a union $\partial M=\Sigma_1\cup\Sigma\cup \overline{\Sigma'}$, where $\Sigma'$ is a copy of $\Sigma$. Let $M_1$ denote the gluing of $M$ with itself along $\Sigma,\overline{\Sigma'}$ and suppose that $M_1$ is a region. Then, $\tau_{\Sigma_1,\Sigma,\overline{\Sigma'};\partial M}(\psi\tprod\xi\tprod\iota_\Sigma(\xi))\in\cH_{\partial M}^\ds$ for all $\psi\in\cH_{\partial M_1}^\ds$ and $\xi\in\cH_\Sigma$. Moreover, for any ON-basis $\{\zeta_i\}_{i\in I}$ of $\cH_\Sigma$, we have for all $\psi\in\cH_{\partial M_1}^\ds$,
\begin{equation}
 \rho_{M_1}(\psi)\cdot c(M;\Sigma,\overline{\Sigma'})
 =\sum_{i\in I}(-1)^{\sig{\zeta_i}}\rho_M\left(\tau_{\Sigma_1,\Sigma,\overline{\Sigma'};\partial M}(\psi\tprod\zeta_i\tprod\iota_\Sigma(\zeta_i))\right),
\label{eq:glueax1}
\end{equation}
where $c(M;\Sigma,\overline{\Sigma'})\in\bC\setminus\{0\}$ is called the \emph{gluing anomaly factor} and depends only on the geometric data.
\end{description}

\acknowledgement{This work was partially supported by CONACYT project grant 259258 and UNAM-DGAPA-PAPIIT project grant IN109415.}

\bibliographystyle{stdnodoi} 
\bibliography{stdrefsb}
\end{document}